\newcommand{\ER}{Erd\"{o}s-R\'{e}nyi }
\newcommand{\vgamma}{\bm{\gamma}}
\newcommand{\vbeta}{\bm{\beta}}
\newcommand{\Gbvec}{(\vgamma,\vbeta)}
\newcommand{\Bounds}{[\gamma_{\text{min}},\gamma_{\text{max}})^p\times [\beta_{\text{min}},\beta_{\text{max}})^p}
\newcommand{\Nfev}{n_{\text{fev}}}
\newtheorem{theorem}{Theorem}
\newtheorem{remark}{Remark}
\begin{document}
\title{A Depth-Progressive Initialization Strategy for Quantum Approximate Optimization Algorithm}

\author{Xinwei Lee}
\email{xwlee@cavelab.cs.tsukuba.ac.jp}
\author{Ningyi Xie}
\author{Dongsheng Cai}
\affiliation{University of Tsukuba, Ibaraki Prefecture, Japan}
\author{Yoshiyuki Saito}
\author{Nobuyoshi Asai}
\affiliation{University of Aizu, Fukushima Prefecture, Japan}

\date{\today}


\begin{abstract}
    The quantum approximate optimization algorithm (QAOA) is known for its capability and universality in solving combinatorial optimization problems on near-term quantum devices.
	The results yielded by QAOA depend strongly on its initial variational parameters $\vgamma$ and $\vbeta$. 
	Hence, parameters selection for QAOA becomes an active area of research as bad initialization might deteriorate the quality of the results, 
	especially at great circuit depths. 
	We first discuss the patterns of optimal parameters in QAOA in two directions: the angle index and the circuit depth.
	Then, we discuss the symmetries and periodicity of the expectation that is used to determine the bounds of the search space.
	Based on the patterns in optimal parameters and the bounds restriction, we propose a strategy that predicts the new initial parameters by taking the difference between previous optimal parameters.
	Unlike most other strategies, the strategy we propose does not require multiple trials to ensure success. It only requires one prediction when progressing to the next depth.
	We compare this strategy with our previously proposed strategy and the layerwise strategy on solving the Max-cut problem, in terms of the approximation ratio and the optimization cost.
	We also address the non-optimality in previous parameters, which is seldom discussed in other works, 
	despite its importance in explaining the behavior of variational quantum algorithms.
\end{abstract}

\maketitle

\section{Introduction}
The Quantum Approximate Optimization Algorithm (QAOA) was first introduced by Farhi et al.~\cite{farhi2014quantum} as a quantum-classical hybrid algorithm, which consists of
a quantum circuit with an outer classical optimization loop, to approximate the solution of combinatorial optimization problems.
Since then, many studies are conducted to discuss its quantum advantage and its implementability on near-term Noisy Intermediate Scale Quantum (NISQ)
devices~\cite{Crooks2018PerformanceOT,Guerreschi_2019,farhi2019quantum,Moussa_2020,Marwaha_2021,leo2022sk,akshay2022}.
QAOA is shown to guarantee the approximation ratio $\alpha > 0.6924$ for circuit depth $p=1$ in the Max-cut problem on 3-regular graphs~\cite{farhi2014quantum}.
Further studies have shown a lower bound of $\alpha > 0.7559$ for $p=2$ and $\alpha > 0.7924$ for $p=3$~\cite{Wurtz_2021}.

Parameters selection of QAOA has been an active area of research due to the difficulties that lie within the classical optimization of QAOA, especially the barren plateaus 
problem~\cite{barren_vqa2021,Cerezo_2021,Wang_2021}.
Patterns in optimal parameters of QAOA have constantly been studied and various strategies are proposed to improve the quality of the 
solution~\cite{Grant_2019,zhu2020adaptive,sack2021quantum,multistart,Shaydulin_2021,Alam2020ML,Moussa_2022,amosy2022}.
For instance, it is found that for some classes of graphs, e.g. regular graphs,
the optimal parameters of a smaller graph can be reused as it is on larger graphs to approximate the solution without solving them~\cite{brandao2018fixed}. This characteristic is
defined as the `parameter concentration' in~\cite{akshay2021parameter} and is found in some projectors as well. Recently, the transferability of parameters
is also studied with the discovery of parameters concentration in $d$-regular subgraphs with the same parity (odd or even)~\cite{galda2021transferability}.
These works focused on the characteristics of the optimal parameters of QAOA in the direction of the problem size $n$.

Another direction that is mostly concerned is the QAOA circuit depth $p$. It is discussed that we usually need larger $p$ to solve problems of larger $n$ with higher $\alpha$.
However, as $p$ grows larger, the increased occurrence of local minima makes the optimization difficult. 
If the QAOA parameters are initialized randomly, there is a high chance that they will converge to an undesired local optimum. 
This is shown in our previous work~\cite{lee2021parameters}, and we proposed to use the previous optimal parameters as starting points for the following depths. We found out
that this improves the convergence of the approximation ratio towards better optimal. This implies that there exist some relationship between the previous optimum and the current optimum. This motivates us to study the
relation of the optimal parameters between circuit depths.

In this work, we study the patterns in the optimal parameters of QAOA Max-cut in two directions: the angle index $j$ and the circuit depth $p$.
We name the pattern exhibited by the optima with respect to $j$ as the \emph{adiabatic path}, and the pattern with respect to $p$ as the \emph{non-optimality},
which we explain each of them in detail in Sec.~\ref{sec:pattern}. 
Also, as the expectation function of QAOA Max-cut is highly periodic and symmetric, the landscape it produces will have multiple optima.
Therefore, for the adiabatic path and the non-optimality patterns to be seen explicitly, it is required to restrict the bounds of the parameter search space,
so that the redundant optimal points in the full search space can be removed.

Based on the adiabatic path and the non-optimality, we propose the \emph{bilinear initialization strategy} (or simply \emph{bilinear strategy})
which generates initial parameters at the new depth given the optimal parameters from the previous depths.
This strategy aims to reproduce the optimal patterns at the new depth so that the initial parameters generated will be near to the optimal parameters, 
reducing the likelihood to converge to an undesired optimum. 
Since the previous parameters are used to predict the new parameters, this strategy requires the optimization at every depth up to the desired depth.
However, unlike most other strategies~\cite{leo2020,campos2021} which requires multiple trials to ensure success, the bilinear strategy only requires one trial at each depth.

We then demonstrate the effect of the bilinear strategy on solving the Max-cut problem for 30 non-isomorphic instances composing different classes of graphs,
including the 3-regular, the 4-regular, and the \ER graphs with different edge probabilities. 
We compare the strategy with our previously proposed parameters fixing strategy, and the layerwise strategy, in terms of approximation ratio and optimization cost.

We also study the case where the strategy might fail in odd-regular graphs with wrongly specified bounds.
The result is interesting as it shows that there exists an optimum in the expectation function of odd-regular graphs which does not follow the adiabatic path pattern.
Instead, the $\beta$ parameters oscillate back and forth. We then explain this phenomenon using the symmetry in odd-regular graphs.

\section{QAOA: Background and Notation}
The objective of QAOA is to maximize the expectation of some cost Hamiltonian $H_z$ with respect to the ansatz state $\ket{\psi\Gbvec}$ prepared by the evolution of the alternating operators:
\begin{equation}
	\ket{\psi_p\Gbvec} = \prod^p_{j=1} e^{-i\beta_j H_x}e^{-i\gamma_j H_z}\ket{+}^{\otimes n} \label{eqn:ansatz}.
\end{equation}
where $\vgamma = (\gamma_1, \gamma_2, \ldots, \gamma_p)$ and $\vbeta = (\beta_1, \beta_2, \ldots, \beta_p)$ are the $2p$ variational parameters, with $\vgamma\in[0,2\pi)^p$ and $\vbeta\in[0,\pi)^p$.
$\ket{+}^{\bigotimes n}$ corresponds to $n$ qubits in the ground state of $H_x = \sum_{j=1}^n X_j$, where $X_j$ is the Pauli $X$ operator acting on the $j$-th qubit.

In this paper, we consider the Max-cut problem, which aims to divide a graph into two parts, with the maximum number of edges between them. The Max-cut problem is an NP-complete problem due to
its reducibility to the MAX-2-SAT problem~\cite{Kar72}. The cost Hamiltonian $H_z$ for the Max-cut problem for an unweighted graph $G = (V,E)$ is given as
\begin{equation}
	H_z = \frac{1}{2}\sum_{(j, k)\in E}(\mathbbm{1} - Z_j Z_k),
\end{equation}
where $Z_j$ is the Pauli $Z$ operator acting on the $j$-th qubit.
We define the expectation of $H_z$ with respect to the ansatz state in Eq.~(\ref{eqn:ansatz}):
\begin{equation}
	F_p(\vgamma,\vbeta) \equiv \langle\psi_p(\vgamma,\vbeta)|H_z|\psi_p(\vgamma,\vbeta)\rangle,
	\label{eqn:fp}
\end{equation}
where $p$ is known as the circuit depth of QAOA. Solving the problem with QAOA is equivalent to maximizing Eq.~(\ref{eqn:fp}), with respect to the variational parameters $\vgamma$ and $\vbeta$. This can be done by
a classical optimizer which search for the maximum $F$ and the parameters that maximize it:
\begin{equation}
	(\vgamma^*,\vbeta^*) \equiv \arg\max_{\vgamma,\vbeta} F(\vgamma,\vbeta),
\end{equation}
where the superscript * denotes optimal parameters. We also define the \emph{approximation ratio} $\alpha$ as
\begin{equation}
	\alpha \equiv \frac{F(\vgamma^*,\vbeta^*)}{C_{\text{max}}},
	\label{eqn:alpha}
\end{equation}
where $C_{\text{max}}$ is the maximum cut value for the graph. The approximation ratio is a typical evaluation metric indicating how near the solution given by QAOA is to the true solution, $0\leq\alpha\leq1$, with
the value of 1 nearer to the true solution. 

Throughout the paper, we use the symbol $\phi$ to generally denote either $\gamma$ or $\beta$ in situations where the distinction of both is not required.
Also, we sometimes use $\bm{\Phi}_p$ to denote the entire parameter vector at circuit depth $p$:
\begin{equation}
    \bm{\Phi}_p \equiv \Gbvec_p = (\gamma_1, ..., \gamma_p, \beta_1, ..., \beta_p).
\end{equation}
For a single parameter, we use $\phi_j^p$ to denote the parameter at circuit depth $p$ with index $j$.

The maximum of $F_p$ in the $p$-level search space will approach $C_{\max}$ as $p\rightarrow\infty$, thus the approximation ratio $\alpha$ will approach 1~\cite{farhi2014quantum}.
However, due to the increased occurrence of local maxima in larger $p$, it gets more difficult to find the maximum $F_p$~\cite{leo2020,Guerreschi_2019}. If the parameters were initialized 
randomly, the optimizer is more likely to be trapped in local maxima for larger $p$~\cite{lee2021parameters}. The choice of initial points to the optimizer determines whether the optimizer converges to a global 
maximum. Hence, we would prefer to have ``good'' initial points for the optimizer to converge to the desired maximum.

\begin{figure*}[t]
    \subfloat[]{
        \begin{minipage}[c]{0.32\textwidth}
            \centering
            \includegraphics[width=\textwidth]{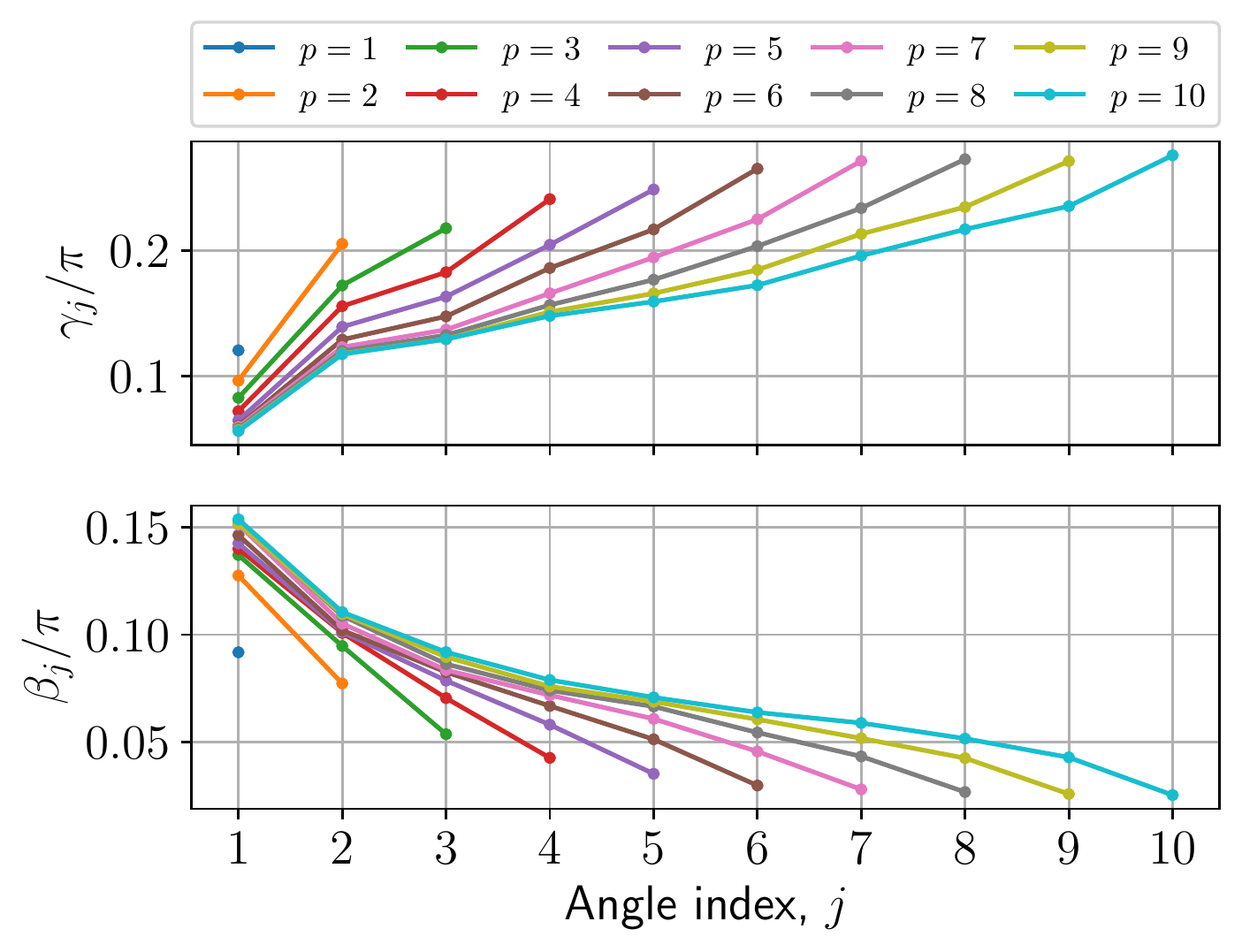}
        \end{minipage}
    }
    \hfill
    \subfloat[]{
        \begin{minipage}[c]{0.32\textwidth}
            \centering
            \includegraphics[width=\textwidth]{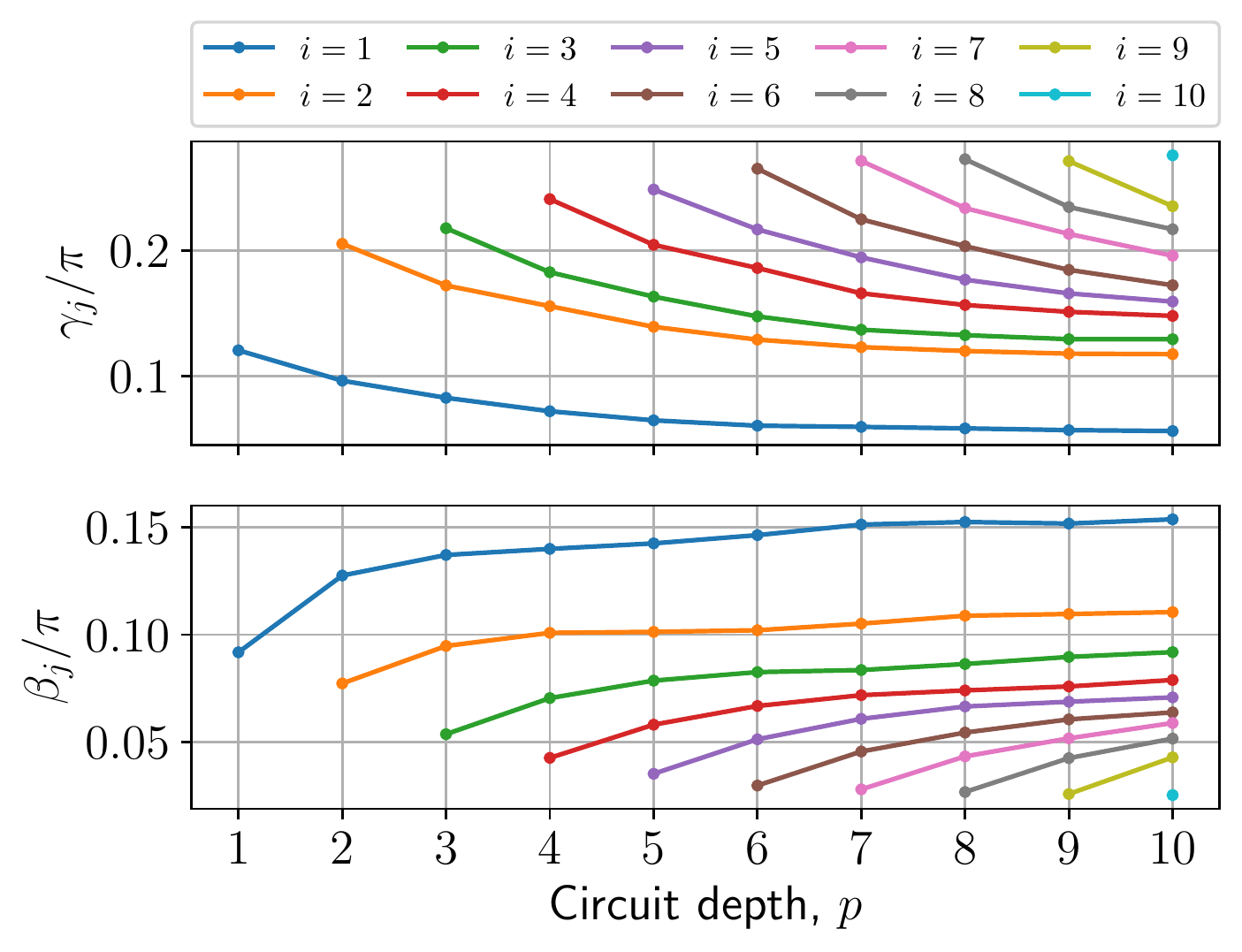}
        \end{minipage}
    }
    \hfill
    \subfloat[]{
        \begin{minipage}[c]{0.32\textwidth}
            \centering
            \includegraphics[width=\textwidth]{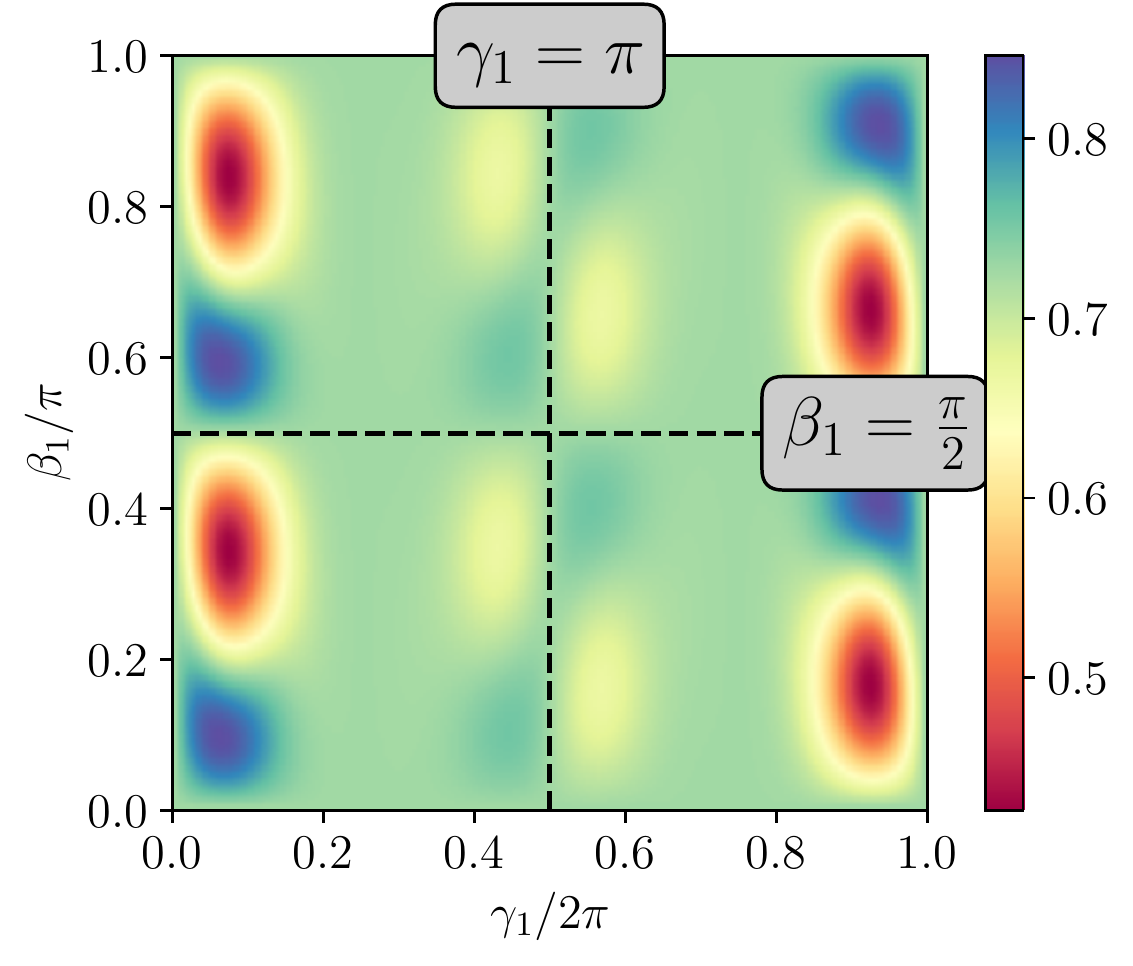}
        \end{minipage}
    }
    \caption{Optimal parameters variation of a 10-node \ER graph with edge probability of 0.7. %
    (a) The variation of the optimal parameters at fixed circuit depth $p$ against the angle index $j$. It shows the adiabatic path of the parameters with increasing $\gamma$ and decreasing $\beta$. %
    (b) The variation of the optimal parameters at fixed angle index $j$ against the circuit depth $p$. It shows the non-optimality of the parameters with decreasing $\gamma$ and increasing $\beta$. %
    (c) The landscape of $p=1$ normalized expectation (i.e. $\alpha$) against $\gamma_1$ and $\beta_1$. The symmetry is shown by the $\gamma_1=\pi$ axis and the periodicity is shown by the $\beta_1=\pi/2$ axis. %
    It can be seen that the landscape in $\gamma_1\in [\pi, 2\pi)$ is the landscape in $\gamma_1\in [0,\pi)$ rotated by $180^{\circ}$, and the landscape just repeats itself beyond $\beta_1=\pi/2$.}
    \label{fig:params-var}
\end{figure*}

\section{Patterns in the optimal parameters of QAOA}\label{sec:pattern}
\subsection{Resemblance to the quantum adiabatic evolution}
It has been repeatedly reported that the optimal parameters resemble the adiabatic quantum computation (AQC) process~\cite{farhi:qaa},
where the mixer Hamiltonian $H_x$ is gradually turned off (decreasing $\beta$) and the cost Hamiltonian $H_z$
is slowly turned on (increasing $\gamma$)~\cite{leo2020,Cook2020TheQA}. 
This comes from the fact that the angles $\gamma$ and $\beta$ is related to the discrete time step of the adiabatic process~\cite{sack2021quantum,farhi:qaa,farhi2014quantum}. 
Consider the time-dependent Hamiltonian $H(t) = (1-t/T)H_x + (t/T)H_z $ going through a simple adiabatic evolution with total run time $T$. Discretizing the evolution gives
\begin{equation}
    e^{-i\int_0^T H(t) dt} \approx \prod_{j=1}^p e^{-iH(j\Delta t)\Delta t},
    \label{eqn:discrete}
\end{equation}
with $t=j\Delta t$. Applying the first order Lie-Suzuki-Trotter decomposition to Eq.~(\ref{eqn:discrete}) gives
\begin{equation}
    (\ref{eqn:discrete}) \approx \prod_{j=1}^p e^{-i(1-j\Delta t/T)H_x\Delta t}e^{-i(j\Delta t/T)H_z\Delta t}.
    \label{eqn:trotter}
\end{equation}
We can then substitute $\gamma_j = (j\Delta t/T)\Delta t$ and $\beta_j = (1-j\Delta t/T)\Delta t$ into Eq.~(\ref{eqn:trotter}), and it leads to the QAOA form in Eq.~(\ref{eqn:ansatz}). Note that the discretization in
Eq.~(\ref{eqn:discrete}) divides the total run time $T$ into $p$ steps, i.e. $\Delta t = T/p$. Hence, we obtain the \emph{parameter-index-depth} relation:
\begin{equation}
    \gamma_j^p = \frac{j}{p}\Delta t;\quad \beta_j^p = \left(1-\frac{j}{p}\right)\Delta t.
    \label{eqn:angle-index-depth}
\end{equation}

It is obvious that $\gamma_j$ increases linearly with $j$ and $\beta_j$ decreases linearly with $j$. 
Previous works~\cite{leo2020,Cook2020TheQA,Crooks2018PerformanceOT,sack2021quantum,Willsch_2020} have shown that the optimal parameters of QAOA tend to follow this linear-like $\emph{adiabatic path}$,
and the pattern becomes nearer to linear as $p$ increases.
Consequently, this pattern is exploited in devising various strategies.
Fig.~\ref{fig:params-var}(a) shows the adiabatic path taken by the optimal parameters at different $p$.
Note that the patterns are not completely linear. 
This might be due to the discretization error and the Trotter error in our process of approximating the continuous evolution,
and it is expected to approach linear as $p\rightarrow \infty$~\cite{Willsch_2020}.

\subsection{Non-optimality of previous parameters}
Besides the adiabatic path, we also discovered the non-optimality of optimal parameters from previous depths, i.e., the optimal parameters for $p$ are not optimal for $p+1$. 
The optimal parameters are shifted by a little as the depth increases, as observed in Fig.~\ref{fig:params-var}(b).
This phenomenon can also be inferred from Eq.~(\ref{eqn:angle-index-depth}). As $p$ increases, at the same index $j$, $\gamma_j$ will decrease and $\beta_j$ will increase.
Also, we noticed that as $p$ gets larger, the parameters at smaller indices have less changes compared to those with larger indices, e.g.,
it can be observed, in Fig.~\ref{fig:params-var}(b), that $|\phi_8^{10} - \phi_8^9|$ (rightmost two points of the gray line) is greater than $|\phi_1^{10} - \phi_1^9|$ (rightmost two of the blue line).
We emphasize that the non-optimality is just the counterpart of the adiabatic path as they can be explained with the same relation, but it is seldom discussed in previous works.
The patterns in the optimal parameters seem to be inherited from the time steps in the discrete adiabatic evolution.

The results of layerwise training of QAOA also implies this non-optimality~\cite{campos2021}. 
Layerwise training is an optimization strategy in which only the parameters of the current layer are optimized, the rest of the parameters are taken from the previous optimal parameters. 
The layerwise training has a relatively low training cost in exchange for a lower approximation ratio, as it suffers from premature saturation (saturation before the approximation ratio reaches 1).
In~\cite{campos2021}, the authors discussed the premature saturation at $p=n$ for the rank-1 projector Hamiltonian $H_z = \ketbra{0^n}$, where $n$ is the number of qubits.
Since the previous parameters in layerwise training are held constant and not allowed to move throughout the optimization, it will not reach the global minimum because of the non-optimality.

\subsection{Bounded optimization of QAOA}\label{subsec:bounded}
The adiabatic path and the non-optimality shows that the optimal parameters exhibit some trends related to AQC.
However, in the QAOA parameter space, not only the adiabatic path leads to the solution of the problem. There are redundant optimal points in the search space which their patterns
do not follow the adiabatic path. This leads to the optimization of the bounded parameter space where no redundancy exists in it. Therefore, the properties of the problem and its parameter space need to be 
studied beforehand to ensure only one optimum exists in the parameter space.

For instance, the bounds of the unweighted Max-cut problem were originally taken as $\vgamma\in [0,2\pi)^p$ and $\vbeta\in [0, \pi)^p$ because of their periodicity~\cite{farhi2014quantum}.
However, it is further discussed in~\cite{leo2020} that the operator $e^{-i(\pi/2)H_x} = X^{\otimes n}$ commutes through the operators in Eq.~(\ref{eqn:ansatz}), 
and due to the symmetry of the solutions, the period of $\beta$ 
becomes $\pi/2$. Also, QAOA has an time-reversal symmetry: 
\begin{equation}
    F_p(\vgamma, \vbeta) = F_p(-\vgamma, -\vbeta) = F_p\left(2\pi-\vgamma, \frac{\pi}{2}-\vbeta\right).
    \label{eqn:arsym}
\end{equation}
The second equality is due to the fact that $\gamma$ has a period of $2\pi$ and $\beta$ has a period of $\pi/2$. 
From Eq.~(\ref{eqn:arsym}), one would expect that the landscape of $F_p$ beyond $\vgamma = \pi$ is the
image of rotation by $180^{\circ}$ of the landscape within $\vgamma=\pi$ (correspond to the reflection of both $\vgamma=\pi$ and $\vbeta=\pi/4$). 
Therefore, in general, the optimization can be done in the bounds $\bm{\Phi}_p \in [0,\pi)^p\times [0,\pi/2)^p$ due to the redundancies in the landscape,
i.e., one part of the landscape being the image of another. 
Fig.~\ref{fig:params-var}(c) shows the visualization of the $p=1$ expectation landscape for a 10-node \ER graph.
It is observed that the maximum point (colored in blue) is redundant beyond $\gamma_1 = \pi$ and $\beta_1 = \pi/2$.
Moreover, in regular graphs, there are symmetries in $e^{-i\pi H_z} = Z^{\otimes n}$ for odd degree regular graphs, and $e^{-i\pi H_z} = \mathbbm{1}$ for even degree regular graphs. 
Thus, the optimization bounds can be further restricted to $[0,\pi/2)^p\times [0,\pi/2)^p$ for unweighted regular graphs.
The details for the periodicity and symmetries are mainly discussed in~\cite{leo2020,Lotshaw_2021,pt-weighted}, and we include the derivations in Appendix~\ref{sec:prop-qaoa}.
In Table~\ref{tab:good-bounds}, we summarized the suitable bounds for $\vgamma$ and $\vbeta$ for different types of graphs to avoid redundancies in the expectation landscape.

\begin{table}[h]
    \centering
    \caption{The suitable optimization bounds for different types of graphs for the QAOA of unweighted Max-cut to avoid redundant optimal points.}
    \begin{tabular}{ccc}
        \hline 
        & $\vgamma$ bounds & $\vbeta$ bounds \\
        \hline
        Regular graphs & $[0,\pi/2)$ & $[0,\pi/2)$ \\
        All other graphs & $[0,\pi)$ & $[0,\pi/2)$ \\
        \hline
    \end{tabular}
    \label{tab:good-bounds}
\end{table}

\section{Bilinear strategy}
Using the properties discussed in Sec.~\ref{sec:pattern}, we devise a strategy that is depth-progressive, i.e.,
the optimization is done depth-by-depth up to the desired depth $p$.
We utilize the fact that in the bounded search space, the optimal parameters undergo smooth changes as shown in
the patterns of the adiabatic path and the non-optimality.
Our strategy tries to reproduce the adiabatic path and the non-optimality patterns so that we can generate initial parameters that are near to the optimal. 
Therefore, we use the difference in previous optimal parameters to predict the initial points for the new parameters, i.e., the parameters for the next depth.
Following the adiabatic path, we can predict $\phi_{j+1}^p$ using $\Delta_{j,j-1}^p \equiv \phi_j^p - \phi_{j-1}^p$.
For the non-optimality, we can predict $\phi_{j}^{p+1}$ using $\Delta_j^{p,p-1} \equiv \phi_j^p - \phi_j^{p-1}$.
We define $\Delta_{i,j}$ as the difference between the parameters $\phi_i$ and $\phi_j$, and this works the same way for the superscript.
We call this the \emph{bilinear strategy} as it involves the linear differences of two directions: $j$ and $p$.

We explain the mechanism of our strategy. First, we can use any exhaustion method to find the optima for $p=1$
and $p=2$ within the specified bounds $\bm{\Phi}_p\in \Bounds$. 
This is to establish the base for our strategy, where we can take the difference between two sets of optimal parameters. 
The bounds are chosen such that there are no redundant optimal in the search space, as mentioned in Sec.~\ref{subsec:bounded}, so that we can capture the pattern.
We start applying the strategy from $p=3$.
The parameters with indices up to $j=p-2$ are extrapolated using the pattern of non-optimality:
\begin{equation}
    \begin{split}
        \forall j\leq p-2,\quad \phi_j^p & = \phi_j^{p-1} + \Delta_j^{p-1,p-2} \\
        & = 2\phi_j^{p-1} - \phi_j^{p-2}.
    \end{split}
    \label{eqn:jp-2}
\end{equation}
The current parameter $\phi_j^p$ is extended from the previous parameter $\phi_j^{p-1}$
by adding the difference between the previous two parameters $\Delta_j^{p-1,p-2} = \phi_j^{p-1}-\phi_j^{p-2}$.
Note that $\Delta$ can be either positive or negative, which determines the direction of the extrapolation.
This agrees with the monotonous change of the optimal parameters. 
For the parameters with index $j=p-1$, we want to use a relation similar to Eq.~(\ref{eqn:jp-2}).
However, the parameter $\phi_{p-1}^{p-2}$ does not exist, so we take the difference from the previous index $j=p-2$ instead:
\begin{equation}
    \phi_{p-1}^p = \phi_{p-1}^{p-1} + \Delta_{p-2}^{p-1,p-2}.
    \label{eqn:jp-1}
\end{equation}
For the newly added parameter $j=p$, it is predicted using the adiabatic path pattern:
\begin{equation}
    \begin{split}
        \phi_p^p & = \phi_{p-1}^p + \Delta_{p-1,p-2}^p \\
        & = 2\phi_{p-1}^p - \phi_{p-2}^p.
    \end{split}
    \label{eqn:jp}
\end{equation}
If the parameters produced in Eq.~(\ref{eqn:jp-2}) -- (\ref{eqn:jp}) are out of the bounds specified,
we will take the boundary value of $\phi_{\text{min}}$ or $\phi_{\text{max}}$ (whichever is nearer) to replace them.
After the process, the initial parameters $\bm{\Phi}_p = (\gamma_1, ..., \gamma_p, \beta_1, ..., \beta_p)$ for $p$ will be obtained, and it is optimized to find the optimal at $p$. 
This entire process is summarized in Algorithm~\ref{alg:bl}. A visualization diagram of the strategy is also shown in Fig.~\ref{fig:bl}.

\begin{algorithm}[H]
    \caption{Bilinear initialization}
    \begin{algorithmic}[1]
        \State \textbf{Input:} $\bm{\Phi}_1^*$ and $\bm{\Phi}_2^*$. $\bm{\Phi}_p\in \Bounds$.
        \For{$p:=3...q$}
            \State Build the initial parameters $\bm{\Phi_p}$:
            \For{$j:=1...p$}
                \If{$j\leq p-2$}
                    \State $\phi_j^p \leftarrow \phi_j^{p-1} + \Delta_j^{p-1,p-2}$
                \ElsIf{$j=p-1$}
                    \State $\phi_j^p \leftarrow \phi_j^{p-1} + \Delta_{j-1}^{p-1,p-2}$
                \ElsIf{$j=p$}
                    \State $\phi_j^p \leftarrow \phi_{j-1}^p + \Delta_{j-1,j-2}^p$
                \EndIf
            \EndFor
            \State Initialize QAOA with $\bm{\Phi}_p$ and perform bounded optimization.
        \EndFor
        \State \textbf{Output:} $\bm{\Phi}_p^*$ and $F_p(\bm{\Phi}_p^*)$ for each $p$ up to $q$.
    \end{algorithmic}
    \label{alg:bl}
\end{algorithm}

\begin{figure}
    \centering
    \includegraphics[width=0.45\textwidth]{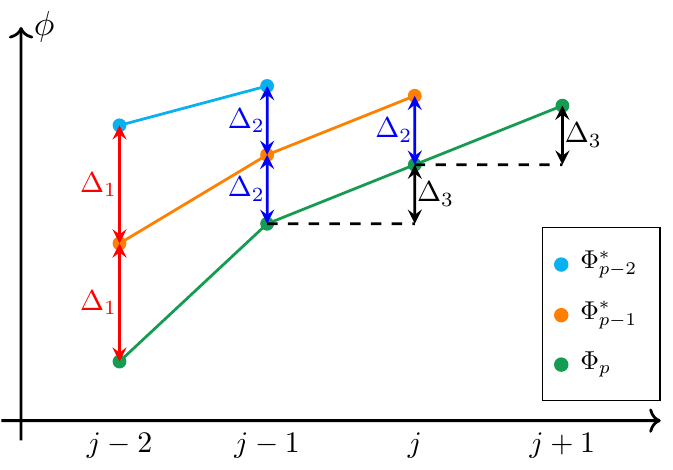}
    \caption{Visualization of the bilinear strategy. $\Delta_1$, $\Delta_2$, and $\Delta_3$ correspond to the values calculated in Eq.~(\ref{eqn:jp-2}), (\ref{eqn:jp-1}), and (\ref{eqn:jp}) respectively. %
    $\Delta_1$ and $\Delta_2$ represent the change due to the non-optimality. $\Delta_3$ represents the change due to the adiabatic path. $\bm{\Phi}_p$ is the new initial parameters extrapolated from %
    $\bm{\Phi}_{p-1}^*$ and $\bm{\Phi}_{p-2}^*$.}
    \label{fig:bl}
\end{figure}

\section{Results}
We apply the bilinear strategy on solving the Max-cut problem for regular graphs and \ER graphs. 
The performance of the strategy is evaluated on 30 non-isomorphic instances of different classes of graphs up to the number of nodes $n=20$, 
which include the 3-regular, 4-regular graphs, and \ER graphs with different edge probabilities.
For the regular graphs, we optimize the parameters within the bound $[0,\pi/2)^p\times [0,\pi/2)^p$, 
whereas the \ER graphs are optimized within $[0,\pi)^p\times [0,\pi/2)^p$.
Here, we only show the results for 4 of the instances in Fig.~\ref{fig:results}, but the trends discussed also apply to all other instances unless particularly stated.

We compare the approximation ratio $\alpha$ obtained from the bilinear strategy with our previously proposed parameters fixing strategy~\cite{lee2021parameters}.
From Fig.~\ref{fig:results}(a)-(d), it can be observed that the $\alpha$ produced by the bilinear strategy traces the optimal $\alpha$ (found by parameters fixing) with minimal error.
The results of the layerwise strategy is also plotted.
Besides the projectors done in the previous work~\cite{campos2021}, we found out that for the Max-cut Hamiltonian, $\alpha$ also saturates at a certain $p$ due to the non-optimality of the parameters.

In Fig.~\ref{fig:results}(e)-(h), we compare the number of function evaluations $\Nfev$ before the convergence of the Limited-memory BFGS Bounded (L-BFGS-B)~\cite{l-bfgs-b} optimizer for the strategies.
It is the number of function calls to the quantum circuit to compute the expectation in Eq.~(\ref{eqn:fp}), and less $\Nfev$ usually means less quantum and classical resources used.
For parameters fixing and layerwise which need multiple trials to ensure success, we consider the total $\Nfev$ for 20 trials.
The results show that the $\Nfev$ required by the bilinear initial points are always less than that of the parameters fixing by an order of $10^2$ to $10^3$ for $p\geq 3$.
This clearly shows the advantage of the bilinear strategy on the optimization cost as only a single trial is required.
For $p=1$ and $p=2$, the $\Nfev$'s are the same as we used parameters fixing to search for the optima. 
As for layerwise, the $\Nfev$'s are relatively small, as only 2 parameters are optimized for each $p$.
It is observed that for small depths up to $p=6$, even the bilinear strategy cost less than layerwise, and the cost grows with $p$ as the number of optimization variables increases.

\begin{figure*}[t]
    \includegraphics[width=\textwidth]{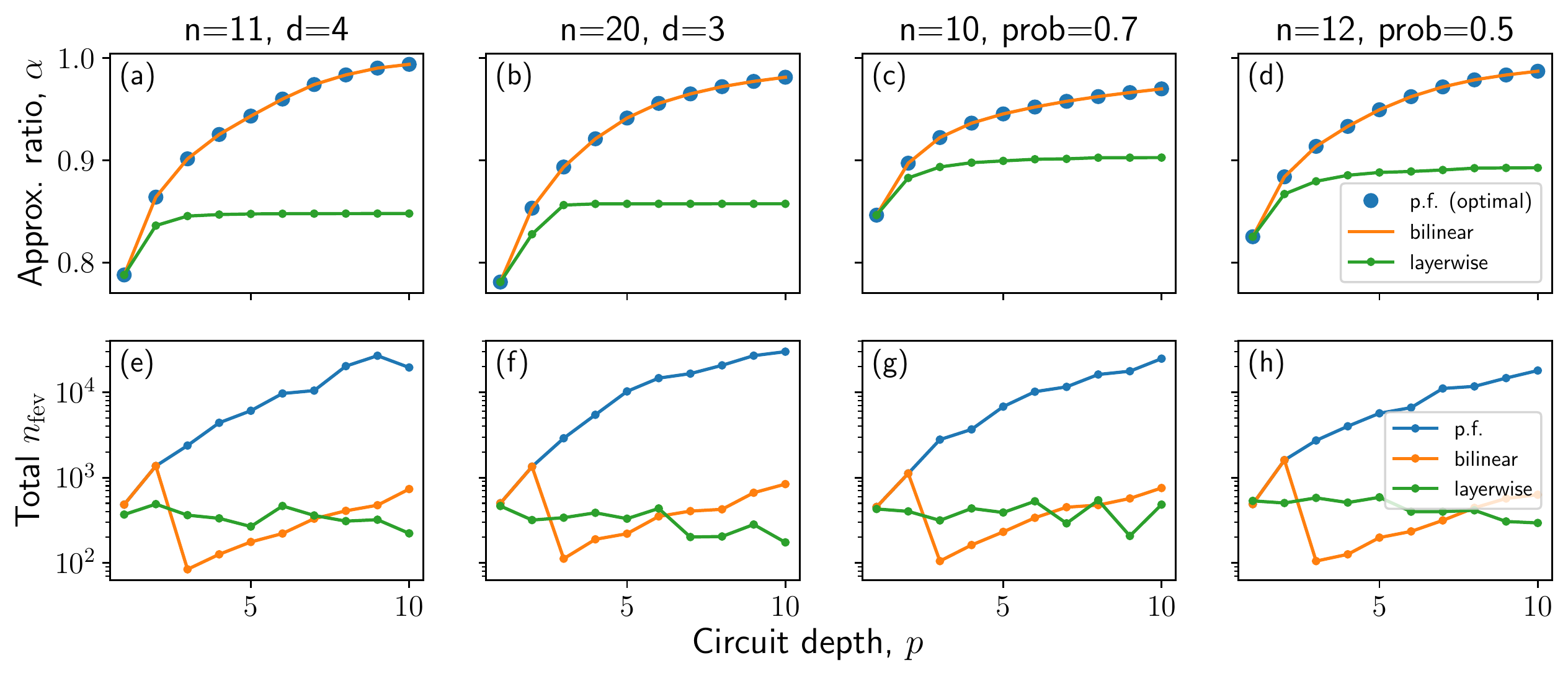}
    \caption{Comparison of the results for parameters fixing, layerwise, and the newly proposed bilinear strategy. %
    $n$ is the number of nodes/vertices of the graph, $d$ is the degree for regular graphs, `prob' is the edge probability for \ER graphs. %
    (a)-(d) show the changes in the approximation ratio $\alpha$ against $p$. (e)-(h) show the $\Nfev$ required before convergence at different $p$'s for the L-BFGS-B optimizer (log scale). %
    For parameters fixing and layerwise, the $\Nfev$ is the total of 20 trials.}
    \label{fig:results}
\end{figure*}

On the other hand, we also consider the case where the bilinear strategy fails, where the initial points do not follow the monotonous trend of the adiabatic path. 
One of the examples is the odd-regular graphs. We mentioned in Section~\ref{subsec:bounded} that one should take the bound $[0,\pi/2)^p\times [0,\pi/2)^p$ for regular graphs to avoid redundancies.
However, if one tries to take the bound $[0,\pi)^p\times [0,\pi/2)^p$, which is considered the general bound for unweighted Max-cut, 
one has chance to fall into the starting point in $\gamma_1\in [\pi/2, \pi)$ for $p=1$ (shown in Fig.~\ref{fig:non-adiabatic-start}(a)). 
In this case, the optimal parameters will not follow the adiabatic path as shown in Fig.~\ref{fig:params-var}(a).
Fig.~\ref{fig:non-adiabatic-start}(b) shows that for this non-adiabatic starting point, the optimal $\beta$'s oscillate back and forth instead.
In fact, this point is symmetric to the adiabatic start $\gamma_1\in [0, \pi/2)$. 
We explain this odd-regular symmetry, including the $\beta$ oscillation in Appendix~\ref{sec:non-adiabatic}.

Fig.~\ref{fig:non-adiabatic-start}(c) shows the result of bilinear strategy with a non-adiabatic start for a 10-node 3-regular graph. 
The $\alpha$ produced by the bilinear strategy traces the optimal until $p=7$, where it deviates after $p=8$.
This shows that the bilinear strategy is still effective to some extent, even for non-adiabatic starts.

\begin{figure*}
    \subfloat[]{
        \begin{minipage}[c]{0.32\textwidth}
            \centering
            \includegraphics[width=\textwidth]{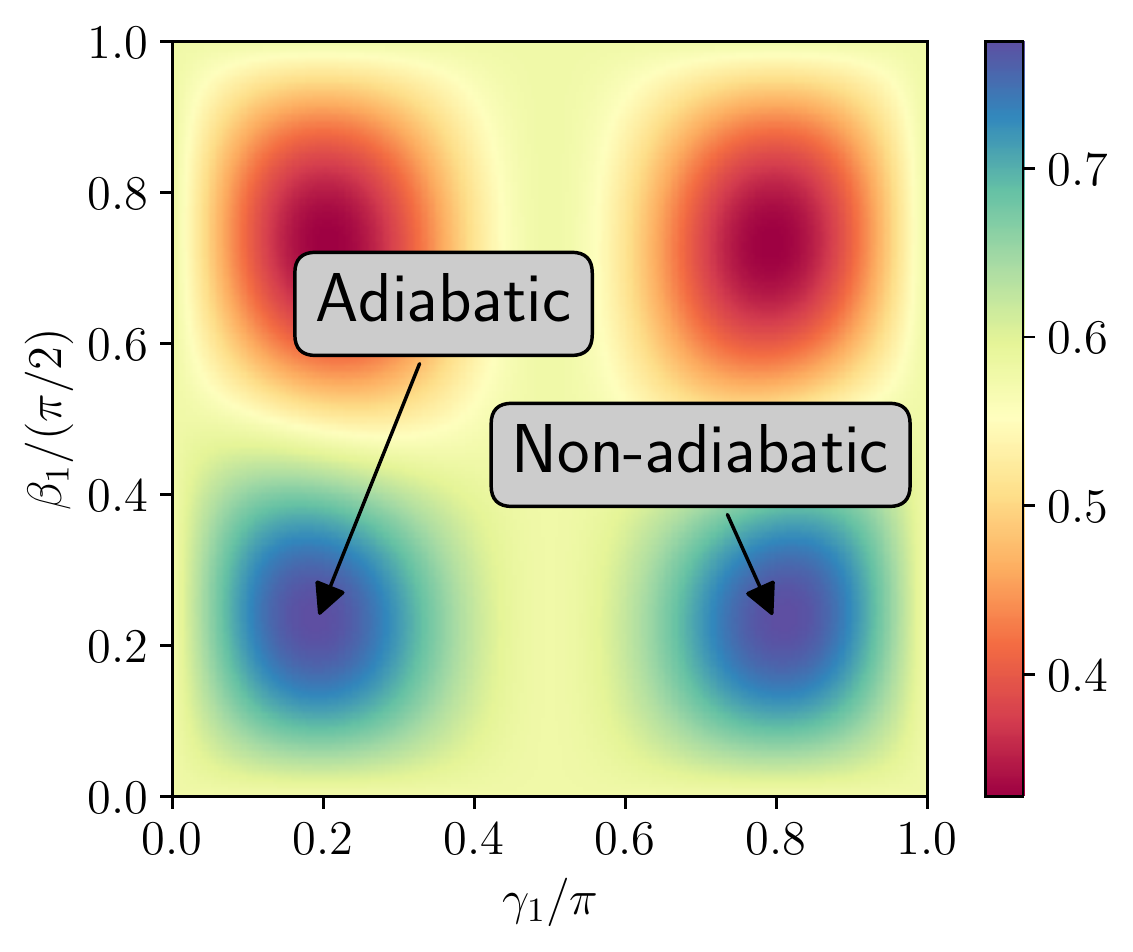}
        \end{minipage}
    }
    \hfill
    \subfloat[]{
        \begin{minipage}[c]{0.32\textwidth}
            \centering
            \includegraphics[width=\textwidth]{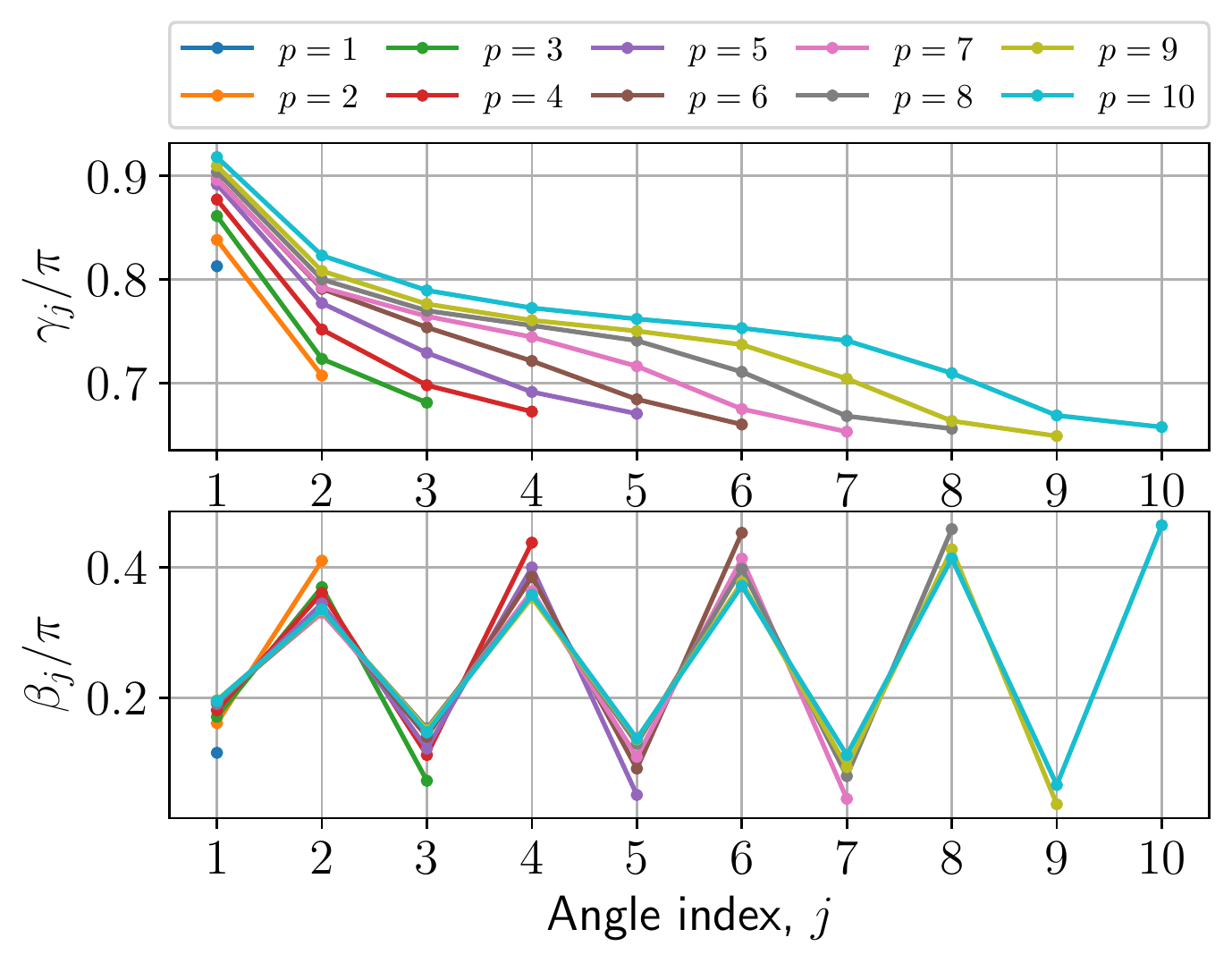}
        \end{minipage}
    }
    \hfill
    \subfloat[]{
        \begin{minipage}[c]{0.32\textwidth}
            \centering
            \includegraphics[width=\textwidth]{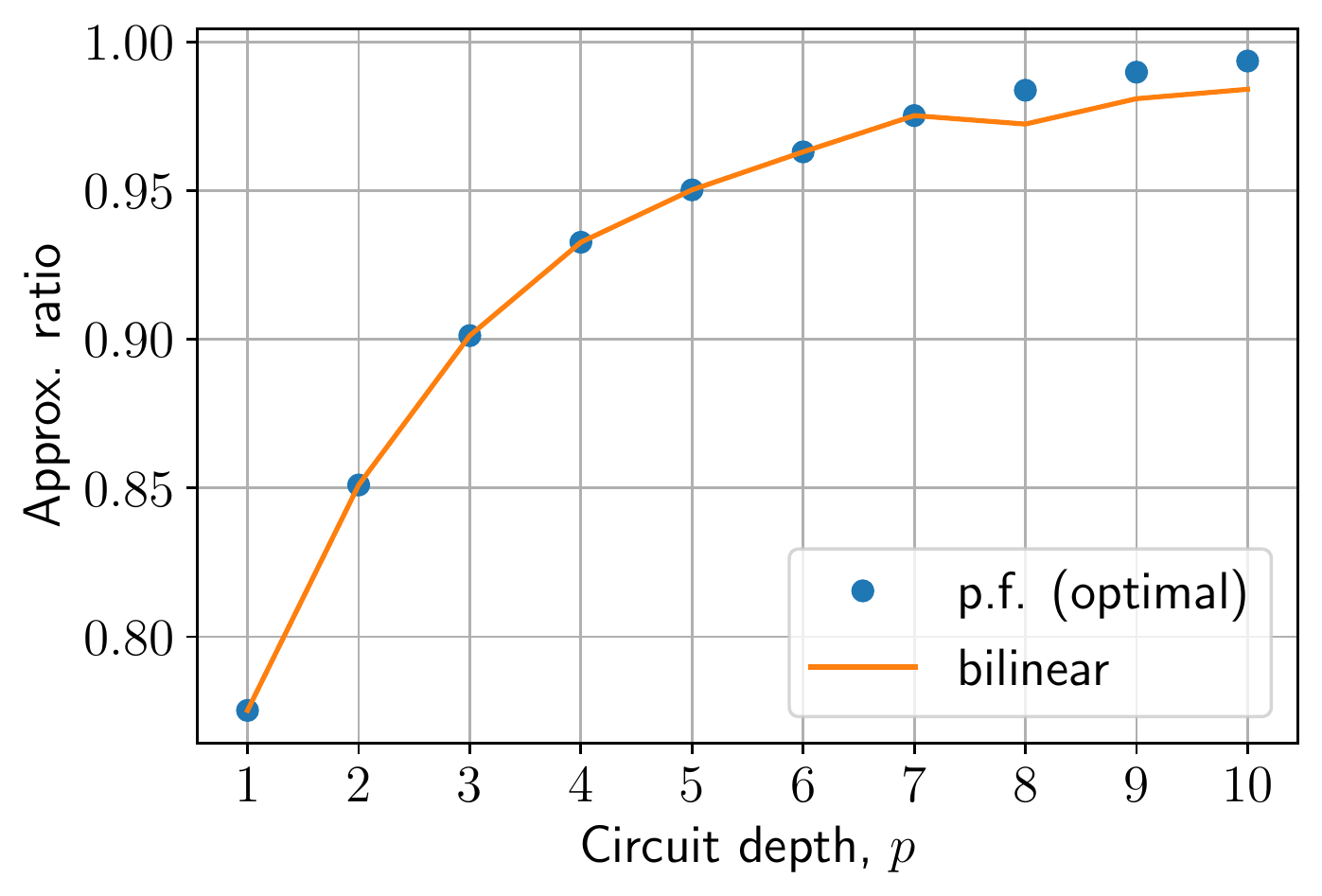}
        \end{minipage}
    }
    \caption{(a) $p=1$ normalized expectation (i.e. $\alpha$) landscape for a 10-node 3-regular graph showing multiple maxima in $\gamma_1\in [0, \pi)$. %
    When used as a starting point in the bilinear strategy, the maximum on the left follows the adiabatic path, whereas the maximum on the right does not follow the adiabatic path. %
    (b) The variation of the optimal parameters with a non-adiabatic start. Unlike the adiabatic start, the $\beta$'s oscillate back and forth. %
    (c) The effect of the bilinear strategy under the non-adiabatic start. }
    \label{fig:non-adiabatic-start}
\end{figure*}

\section{Conclusion and Outlook}
To conclude, we have studied the patterns in the optimal parameters of QAOA for the unweighted Max-cut problem in two directions, namely the angle index $j$ and the circuit depth $p$.
We call the variation against $j$ and $p$ the \emph{adiabatic path} and the \emph{non-optimality} respectively. 
By leveraging these properties, we devise the depth-progressive bilinear strategy, in which the optimization is done for each depth until the desired depth. 
The bilinear strategy utilizes the optimal parameters from the previous two depths, $\bm{\Phi}_{p-2}^*$ and $\bm{\Phi}_{p-1}^*$, to initialize the parameters for the current depth $\bm{\Phi}_p$.

We have demonstrated the effectiveness of the bilinear strategy by comparing it with the parameters fixing strategy~\cite{lee2021parameters} and the layerwise~\cite{campos2021}
strategy on 30 non-isomorphic random regular and \ER graphs.
The results show bilinear is able to trace the optimal approximation ratio $\alpha$ found by parameters fixing.
Whilst, we have also observed the premature saturation occurring in the Max-cut Hamiltonian for layerwise. 
It is also found out that the number of function evaluations $\Nfev$ of bilinear is less than that of parameters fixing due to its single prediction.

The bilinear strategy is advantageous against most other strategies~\cite{leo2020,campos2021} that usually require multiple trials to ensure success, including the parameters fixing strategy.
It only requires the optimization of one set of initial parameters at each circuit depth.
The bilinear strategy also requires the knowledge of the bounds for the optimization to avoid redundancies in the search space, hence ensuring success.
However, we have considered the case where it fails when initialized from a ``non-adiabatic'' point.
Numerically, for a particular 3-regular graph, it is still capable of tracing the optimal until circuit depth $p=7$.

We suggest some potential work that can be done in the future.
Since the new prediction is extrapolated from the change of the optimal parameters, we can increase the depth-step of the bilinear strategy for less optimization cost.
In this work, we have shown using the depth-step of 1. One can, for example, increase the depth-step to 2 ($p=2, 4, 6, ...$) while progressing to larger circuit depths.
Although not tested, the bilinear strategy is expected to perform on different kinds of problems (different $H_z$) in which their optimal parameters follow the adiabatic path and
non-optimality, which is believed to be true for QAOA. 
This is also a good future work to explore.

\appendix

\section{Properties of QAOA Max-cut}\label{sec:prop-qaoa}
The QAOA for the Max-cut problem is highly periodic and symmetric. This is addressed in several works~\cite{leo2020,Lotshaw_2021,pt-weighted}.
In this section, we derive the properties that help us to avoid global optima redundancies and to determine the bounds for the optimization.

\begin{theorem}[Angle-reversal symmetry of QAOA]
    The expectation of QAOA  stays the same when its angles (parameters) are negated.
    \begin{equation}
        F_p(\vgamma, \vbeta) = F_p(-\vgamma, -\vbeta),
        \label{eqn:angle-reversal}
    \end{equation}
    for any circuit depth $p$. This is true for any Hermitian mixer and problem Hamiltonian $H_x$ and $H_z$.
\end{theorem}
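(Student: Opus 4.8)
The plan is to exploit the fact that $F_p(\vgamma,\vbeta)$ is a real number, so it equals its own complex conjugate, and then track what complex conjugation does to the QAOA ansatz. First I would write out the expectation explicitly as
\begin{equation}
    F_p(\vgamma,\vbeta) = \langle +|^{\otimes n}\, U^\dagger(\vgamma,\vbeta)\, H_z\, U(\vgamma,\vbeta)\,|+\rangle^{\otimes n},
\end{equation}
where $U(\vgamma,\vbeta) = \prod_{j=1}^p e^{-i\beta_j H_x}e^{-i\gamma_j H_z}$. Since $F_p$ is real, $F_p = \overline{F_p} = F_p^\dagger$ (as a $1\times 1$ matrix). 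Taking the adjoint of the whole expression reverses the product order and sends each $e^{-i\beta_j H_x}\mapsto e^{+i\beta_j H_x}$, $e^{-i\gamma_j H_z}\mapsto e^{+i\gamma_j H_z}$, using only Hermiticity of $H_x$ and $H_z$. The key observation is that the adjoint expression is \emph{itself} a QAOA expectation, namely the one built from the parameter sequences run in reverse order and with all signs flipped.

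The main step is then to argue that reversing the order of the layers does not change the expectation, so that only the sign flip survives, giving $F_p(\vgamma,\vbeta) = F_p(-\vgamma,-\vbeta)$. Concretely, after taking the adjoint I get
\begin{equation}
    F_p(\vgamma,\vbeta) = \langle +|^{\otimes n}\Big(\textstyle\prod_{j=1}^{p} e^{i\gamma_j H_z}e^{i\beta_j H_x}\Big)^{\!\dagger} H_z \Big(\prod_{j=1}^{p} e^{i\gamma_j H_z}e^{i\beta_j H_x}\Big)|+\rangle^{\otimes n},
\end{equation}
where the inner product runs in the reversed index order. I would reindex $j\mapsto p+1-j$; because $H_z$ sits symmetrically between $V$ and $V^\dagger$ and $|+\rangle^{\otimes n}$ is the common reference state on both sides, the reversal of the layer order is exactly a relabeling of the dummy parameter indices and leaves the scalar $F_p$ invariant. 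What remains is precisely $U(-\vgamma,-\vbeta)$ acting on $|+\rangle^{\otimes n}$, sandwiching $H_z$, i.e. $F_p(-\vgamma,-\vbeta)$.

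The part that needs the most care is making the ``reversal is just relabeling'' argument airtight rather than hand-wavy: one must check that after taking $(\cdot)^\dagger$ the product genuinely appears in the opposite order, and that the substitution $j\mapsto p+1-j$ carries the sandwiched form $\langle +|\,W^\dagger H_z W\,|+\rangle$ for $W=\prod_j e^{i\gamma_j H_z}e^{i\beta_j H_x}$ back to the standard QAOA form $\langle +|\,U^\dagger H_z U\,|+\rangle$ with parameters $(-\vgamma,-\vbeta)$ — in particular that the mixer and phase operators inside each layer stay paired correctly. I would therefore present the two sides with explicit index ranges and perform the reindexing in one displayed line, rather than gesturing at it. No property of $|+\rangle^{\otimes n}$ beyond being a fixed state is used, and no property of $H_x$, $H_z$ beyond Hermiticity, which matches the generality claimed in the statement; this is worth remarking on explicitly at the end of the proof.
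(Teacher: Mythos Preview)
Your argument has a genuine gap at the very first move. Taking the Hermitian adjoint of the scalar $F_p=\langle +|^{\otimes n}\,U^\dagger H_z U\,|+\rangle^{\otimes n}$ returns exactly the same expression, because $U^\dagger H_z U$ is itself Hermitian. If you write the full operator string and apply $(\cdot)^\dagger$, you do reverse the factor order and replace each $e^{-i\alpha H}$ by $e^{+i\alpha H}$, but the string $U^\dagger H_z U$ is a palindrome under that operation, so you land back on $F_p(\vgamma,\vbeta)$. The adjoint therefore only re-proves that $F_p$ is real; it never produces an expression that can be read as $F_p(-\vgamma,-\vbeta)$. Your second step compounds the problem: reversing the layer order is \emph{not} a relabeling of dummy indices, since the layer unitaries do not commute, so $F_p(\gamma_1,\ldots,\gamma_p,\beta_1,\ldots,\beta_p)\neq F_p(\gamma_p,\ldots,\gamma_1,\beta_p,\ldots,\beta_1)$ in general.

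What the paper actually does is apply \emph{entry-wise complex conjugation} in the computational basis, not the adjoint. Conjugation preserves the operator ordering and sends $e^{-i\alpha H}\mapsto e^{+i\alpha H}$ whenever $H$ is a real matrix in that basis; since $|+\rangle^{\otimes n}$ is also real, one reads off $\overline{F_p(\vgamma,\vbeta)}=F_p(-\vgamma,-\vbeta)$ directly, with no reordering step at all. This silently uses that $H_x$ and $H_z$ are real, not merely Hermitian, so the ``any Hermitian $H_x,H_z$'' clause in the statement is in fact a little too strong --- which is also why no Hermiticity-only argument like yours can succeed. A one-qubit check with $H_x=Y$, $H_z=Z$, initial state $|+\rangle$ gives $F_1(\gamma,\beta)=-\sin(2\beta)\cos(2\gamma)$, which changes sign under $(\gamma,\beta)\mapsto(-\gamma,-\beta)$.
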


\begin{proof}
    We use the fact that the expectation $F(\vgamma, \vbeta)$ is real, so its complex conjugate is just itself:
    $F(\vgamma, \vbeta) = \overline{F(\vgamma, \vbeta)}$. 
    Hence, for any Hermitian matrices $H_x$ and $H_z$,
    \begin{equation}
        F_p(\vgamma, \vbeta) = \overline{F_p(\vgamma, \vbeta)}
    \end{equation}
    \begin{multline}
        \bra{+}^{\otimes n}e^{i\beta_1H_x}e^{i\gamma_1H_z}\cdots H_z
            \cdots e^{-i\beta_1H_x}e^{-i\gamma_1H_z}\ket{+}^{\otimes n} \\
            = \bra{+}^{\otimes n}e^{-i\beta_1H_x}e^{-i\gamma_1H_z}\cdots H_z
            \cdots e^{i\beta_1H_x}e^{i\gamma_1H_z}\ket{+}^{\otimes n}
    \end{multline}
    \begin{equation}
        F_p(\vgamma, \vbeta) = F_p(-\vgamma, -\vbeta).
    \end{equation}
\end{proof}

\begin{theorem}[General periodicity and symmetry for unweighted Max-cut]
    The expectation function of the unweighted Max-cut problem for any graph has a period of $2\pi$ w.r.t. the parameter(s) $\vgamma$, 
    and a period of $\pi/2$ w.r.t. the parameter(s) $\vbeta$.
    \begin{align}
        F_p(\vgamma, \vbeta) & = F_p(\vgamma + 2\pi, \vbeta) \label{eqn:period-gamma} \\
        & = F_p\left(\vgamma, \vbeta + \frac{\pi}{2}\right) \label{eqn:period-beta}\\
        & = F_p\left(\vgamma + 2\pi, \vbeta + \frac{\pi}{2}\right), \label{eqn:period-gamma-beta}
    \end{align}
    where $\bm{\phi} + c$ means a shift of every element in the parameter vector $\bm{\phi}$ by a scalar $c$,
    $\bm{\phi} + c = (\phi_1+c, \phi_2+c, ..., \phi_p+c)$.
    
    Combining the angle-reversal and the periodicity creates a symmetry on the expectation function:
    \begin{equation}
        F_p(\vgamma, \vbeta) = F_p\left(2\pi-\vgamma, \frac{\pi}{2}-\vbeta\right).
        \label{eqn:general-symmetry}
    \end{equation}
    \label{theorem:period}
\end{theorem}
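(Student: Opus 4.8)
The plan is to establish the three periodicity identities Eq.~(\ref{eqn:period-gamma})--(\ref{eqn:period-gamma-beta}) one coordinate at a time, and then obtain the symmetry Eq.~(\ref{eqn:general-symmetry}) by composing Eq.~(\ref{eqn:period-gamma-beta}) with the angle-reversal symmetry Eq.~(\ref{eqn:angle-reversal}). Throughout I would work directly with the ansatz state of Eq.~(\ref{eqn:ansatz}) and repeatedly use that $F_p$ in Eq.~(\ref{eqn:fp}) is insensitive to an overall phase of $\ket{\psi_p(\vgamma,\vbeta)}$, since any such phase cancels between the bra and the ket.

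\textbf{Step 1 ($\vgamma$-period).} First I would note that for an \emph{unweighted} graph each summand $\tfrac12(\mathbbm{1}-Z_jZ_k)$ of $H_z$ is a projector, so $H_z$ is diagonal in the computational basis with eigenvalues equal to cut sizes, i.e.\ non-negative integers. Hence $e^{-2\pi i H_z}=\mathbbm{1}$, so $e^{-i(\gamma_l+2\pi)H_z}=e^{-i\gamma_l H_z}$ for every coordinate $l$. Substituting this into Eq.~(\ref{eqn:ansatz}) shows $\ket{\psi_p(\vgamma,\vbeta)}$ is literally unchanged when any $\gamma_l$, hence the whole vector $\vgamma$, is shifted by $2\pi$, giving Eq.~(\ref{eqn:period-gamma}). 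I would flag that integrality of the spectrum of $H_z$ — and thus the unweighted hypothesis — is the one assumption doing real work here.

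\textbf{Step 2 ($\vbeta$-period and the symmetry).} Here I would use the elementary identity $e^{-i(\pi/2)H_x}=\prod_{m=1}^n e^{-i(\pi/2)X_m}=(-i)^n X^{\otimes n}$ together with three facts: (i) $X^{\otimes n}$ commutes with $H_x$, trivially; (ii) $X^{\otimes n}$ commutes with $H_z$, because moving $X^{\otimes n}$ past a term $Z_jZ_k$ produces two sign flips, one from $X_jZ_j=-Z_jX_j$ and one from $X_kZ_k=-Z_kX_k$, which cancel; and (iii) $X^{\otimes n}\ket{+}^{\otimes n}=\ket{+}^{\otimes n}$. Shifting a single $\beta_l\mapsto\beta_l+\tfrac{\pi}{2}$ inserts a factor $(-i)^n X^{\otimes n}$ next to the $l$-th mixer exponential in Eq.~(\ref{eqn:ansatz}); by (i)--(ii) this $X^{\otimes n}$ commutes past every exponential standing between it and $\ket{+}^{\otimes n}$, and by (iii) it then acts as the identity, so $\ket{\psi_p(\vgamma,\vbeta)}$ picks up only the global phase $(-i)^n$ and $F_p$ is unchanged. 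Applying this coordinatewise gives Eq.~(\ref{eqn:period-beta}) for the full vector; composing Steps 1 and 2 gives Eq.~(\ref{eqn:period-gamma-beta}). Finally, starting from Eq.~(\ref{eqn:angle-reversal}), $F_p(\vgamma,\vbeta)=F_p(-\vgamma,-\vbeta)$, I would apply Eq.~(\ref{eqn:period-gamma-beta}) evaluated at $(-\vgamma,-\vbeta)$ to add $2\pi$ to every $\gamma$ and $\tfrac{\pi}{2}$ to every $\beta$, obtaining $F_p\!\left(2\pi-\vgamma,\tfrac{\pi}{2}-\vbeta\right)$, which is Eq.~(\ref{eqn:general-symmetry}).

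The main obstacle is organizational rather than conceptual: the delicate points are verifying the cancellation of the two sign flips underlying $[X^{\otimes n},H_z]=0$, and phrasing the ``slide $X^{\otimes n}$ through to the initial state'' argument so that it is manifestly independent of whether the product in Eq.~(\ref{eqn:ansatz}) is read with the $j=1$ layer leftmost or rightmost (and, if all $\beta_l$ are shifted simultaneously, noting that $(X^{\otimes n})^p\ket{+}^{\otimes n}=\ket{+}^{\otimes n}$ regardless of the parity of $p$). Everything else is routine.
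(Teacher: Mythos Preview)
Your proposal is correct and follows essentially the same route as the paper: show the ansatz is (up to a global phase) unchanged under the $\gamma$- and $\beta$-shifts by using $e^{-2\pi i H_z}=\mathbbm{1}$ and by sliding $X^{\otimes n}$ through the circuit onto $\ket{+}^{\otimes n}$, then compose with angle-reversal. If anything, you are slightly more careful than the paper---you keep the phase in $e^{-i(\pi/2)H_x}=(-i)^n X^{\otimes n}$, explicitly justify $[X^{\otimes n},H_z]=0$ via the double sign flip, and do the shifts coordinatewise (which the paper relegates to a remark).
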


\begin{proof}
    It is known that $e^{-i(2\pi) H_z} = \mathbbm{1}$.
    Consider the ansatz with $\vgamma$ shifted by $2\pi$:
    \begin{align}
        \ket{\psi_p(\vgamma+2\pi, \vbeta)} & = \prod_{j=1}^p e^{-i\beta_jH_x}e^{-i\gamma_jH_z}e^{-i(2\pi) H_z}\ket{+}^{\otimes n}\\
        & = \prod_{j=1}^p e^{-i\beta_jH_x} e^{-i\gamma_jH_z} \mathbbm{1} \ket{+}^{\otimes n} \\
        & = \prod_{j=1}^p e^{-i\beta_jH_x} e^{-i\gamma_jH_z} \ket{+}^{\otimes n} \\
        & = \ket{\psi_p(\vgamma, \vbeta)}.
    \end{align}
    Since the ansatz stays the same under the shift, so is the expectation, hence proving Eq.~(\ref{eqn:period-gamma}).
    
    It is known that $e^{i(\pi/2)H_x} = X^{\otimes n}$ commutes with the QAOA operators 
    $e^{-i\beta H_x}$ and $e^{-i\gamma H_z}$. 
    The former commutation is due to the rotation of the same Pauli. The latter is due to the symmetry of the eigenstates of $H_z$,
    e.g., the eigenvalue (or cut-value) of $\ket{0110}$ is equal to the eigenvalue of $\ket{1001}$. The eigenvalues are invariant under the bit-flip operation $X^{\otimes n}$.
    Consider the ansatz with $\vbeta$ shifted by $\pi/2$:
    \begin{align}
        \ket{\psi_p\left(\vgamma, \vbeta+\frac{\pi}{2}\right)} & = \prod_{j=1}^p e^{-i\beta_jH_x} e^{-i(\pi/2)H_x} e^{-i\gamma_jH_z} \ket{+}^{\otimes n}\\
        & = \prod_{j=1}^p e^{-i\beta_jH_x} X^{\otimes n} e^{-i\gamma_jH_z} \ket{+}^{\otimes n}.
    \end{align}
    Since $X^{\otimes n}$ commutes through the operators, we can move all of them to the rightmost before the initial state $\ket{+}^{\otimes n}$.
    Note that since $\ket{+}^{\otimes n}$ is the eigenstate of $X^{\otimes n}$, it will not change the initial state. 
    Therefore, we have
    \begin{align}
        \ket{\psi_p\left(\vgamma, \vbeta+\frac{\pi}{2}\right)} & = \prod_{j=1}^p e^{-i\beta_jH_x} e^{-i\gamma_jH_z} X^{\otimes n} \ket{+}^{\otimes n}\\
        & = \prod_{j=1}^p e^{-i\beta_jH_x} e^{-i\gamma_jH_z} \ket{+}^{\otimes n} \\
        & = \ket{\psi_p(\vgamma, \vbeta)}.
    \end{align}
    Hence proving Eq.~(\ref{eqn:period-beta}).
    
    Combining Eq.~(\ref{eqn:angle-reversal}) and (\ref{eqn:period-gamma-beta}), Eq.~(\ref{eqn:general-symmetry}) can be derived.
    \begin{equation}
        F_p(\vgamma, \vbeta) = F_p(-\vgamma, -\vbeta) = F_p \left(2\pi-\vgamma, \frac{\pi}{2}-\vbeta\right),
    \end{equation}
    showing the expectation function reflects over the axis $\vgamma = \pi$, and then the axis $\vbeta = \pi/4$. 
    This is equivalent to a $180^{\circ}$ rotation about the point $(\vec{\pi}, \vec{\pi}/4)$, where $\vec{\pi} = (\pi, \pi, ..., \pi)$.
\end{proof}

\begin{remark}
The proof of Theorem~\ref{theorem:period} shows that the periodicity is not only true for the shift for the parameter vector $\bm{\phi}$, 
but also true for any arbitrary number of $\phi_j$ shift by its corresponding period, as any number of the operator $\mathbbm{1}$ (or $X^{\otimes n}$)
will still be canceled out.
\label{remark:any-shift}
\end{remark}

\begin{theorem}[Periodcity and symmetry for even-regular graphs]
    For the Max-cut of even-regular graphs, the expectation has a period of $\pi$ w.r.t. the parameter(s) $\vgamma$, 
    which is shorter than the general period.
    \begin{align}
        F_p(\vgamma, \vbeta) & = F_p(\vgamma+\pi, \vbeta) \\
        & = F_p\left(\vgamma+\pi, \vbeta+\frac{\pi}{2}\right). \label{eqn:even-reg-period}
    \end{align}
    Due to the angle-reversal, the expectation of even-regular graphs has the symmetry
    \begin{equation}
        F_p(\vgamma, \vbeta) = F_p\left(\pi-\vgamma, \frac{\pi}{2}-\vbeta\right).
        \label{eqn:even-reg-symmetry}
    \end{equation}
\end{theorem}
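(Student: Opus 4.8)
The plan is to follow the template of the proof of Theorem~\ref{theorem:period} almost verbatim, the only genuinely new ingredient being an operator identity special to even-regular graphs: $e^{-i\pi H_z}=\mathbbm{1}$. Once that is in hand, a $\pi$-shift of $\vgamma$ can be absorbed into the ansatz exactly the way a $2\pi$-shift was in Theorem~\ref{theorem:period}, and the remaining two displayed identities follow by composing with the $\pi/2$-period in $\vbeta$ (Eq.~(\ref{eqn:period-beta})) and with angle-reversal (Eq.~(\ref{eqn:angle-reversal})).

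First I would establish $e^{-i\pi H_z}=\mathbbm{1}$ for a $d$-regular graph with $d$ even. Since the edge terms of $H_z=\tfrac12\sum_{(j,k)\in E}(\mathbbm{1}-Z_jZ_k)$ mutually commute, $e^{-i\pi H_z}=\prod_{(j,k)\in E}e^{-i(\pi/2)(\mathbbm{1}-Z_jZ_k)}$. A one-line computation using $(Z_jZ_k)^2=\mathbbm{1}$ gives $e^{-i(\pi/2)(\mathbbm{1}-Z_jZ_k)}=e^{-i\pi/2}\bigl(\cos\tfrac{\pi}{2}\,\mathbbm{1}+i\sin\tfrac{\pi}{2}\,Z_jZ_k\bigr)=Z_jZ_k$. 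Hence $e^{-i\pi H_z}=\prod_{(j,k)\in E}Z_jZ_k=\prod_{v\in V}Z_v^{\deg v}=\prod_{v\in V}Z_v^{\,d}$, because each vertex lies on exactly $\deg v=d$ edges and all the $Z_v$ commute; for $d$ even every factor $Z_v^{d}=\mathbbm{1}$, so $e^{-i\pi H_z}=\mathbbm{1}$. (The same computation gives $e^{-i\pi H_z}=Z^{\otimes n}$ when $d$ is odd, which is precisely why odd-regular graphs behave differently.)

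Next I would repeat the ansatz manipulation of Theorem~\ref{theorem:period} with $2\pi$ replaced by $\pi$: inserting $e^{-i\pi H_z}=\mathbbm{1}$ in front of each $e^{-i\gamma_jH_z}$ shows $e^{-i(\gamma_j+\pi)H_z}=e^{-i\gamma_jH_z}$, hence $\ket{\psi_p(\vgamma+\pi,\vbeta)}=\ket{\psi_p(\vgamma,\vbeta)}$ and therefore $F_p(\vgamma,\vbeta)=F_p(\vgamma+\pi,\vbeta)$, which is the first line of Eq.~(\ref{eqn:even-reg-period}). Composing this with the $\pi/2$-period in $\vbeta$, Eq.~(\ref{eqn:period-beta}), gives $F_p(\vgamma,\vbeta)=F_p(\vgamma+\pi,\vbeta+\pi/2)$. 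Finally, applying angle-reversal Eq.~(\ref{eqn:angle-reversal}) and then this $(\pi,\pi/2)$-shift yields $F_p(\vgamma,\vbeta)=F_p(-\vgamma,-\vbeta)=F_p(\pi-\vgamma,\tfrac{\pi}{2}-\vbeta)$, which is Eq.~(\ref{eqn:even-reg-symmetry}).

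The only step that carries any content is the operator identity, and within it the point to be careful about is the edge-to-vertex reindexing $\prod_{(j,k)\in E}Z_jZ_k=\prod_{v\in V}Z_v^{\deg v}$ together with the use of $d$ even; everything downstream is a mechanical copy of Theorem~\ref{theorem:period}. An alternative route to the same identity is spectral: for $d$-regular graphs every cut value is congruent mod $2$ to $d$ times the size of one side of the cut, so for $d$ even all eigenvalues of $H_z$ are even integers and $e^{-i\pi H_z}=\mathbbm{1}$ at once; I would keep the operator-product version in the main text since it parallels the style of the preceding proofs.
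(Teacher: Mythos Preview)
Your proposal is correct and follows essentially the same route as the paper's proof: assert $e^{-i\pi H_z}=\mathbbm{1}$ for even-regular graphs, insert it into the ansatz to absorb a $\pi$-shift of $\vgamma$, then combine with Eq.~(\ref{eqn:period-beta}) and Eq.~(\ref{eqn:angle-reversal}). The only difference is that the paper simply states $e^{-i\pi H_z}=\mathbbm{1}$ as a known fact, whereas you supply a derivation (and a spectral alternative); your added justification is correct and strictly more complete than what the paper gives.
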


\begin{proof}
    For the $H_z$ of even-regular graphs, $e^{-i\pi H_z} = \mathbbm{1}$.
    Consider the ansatz with $\vgamma$ shifted by $\pi$:
    \begin{align}
        \ket{\psi_p(\vgamma+\pi, \vbeta)} & = \prod_{j=1}^p e^{-i\beta_jH_x}e^{-i\gamma_jH_z}e^{-i\pi H_z}\ket{+}^{\otimes n}\\
        & = \prod_{j=1}^p e^{-i\beta_jH_x} e^{-i\gamma_jH_z} \ket{+}^{\otimes n} \\
        & = \ket{\psi_p(\vgamma, \vbeta)}.
    \end{align}
    Also, by combining Eq.~(\ref{eqn:angle-reversal}) and (\ref{eqn:even-reg-period}), we can derive the symmetry.
    \begin{equation}
        F_p(\vgamma, \vbeta) = F_p(-\vgamma, -\vbeta) = F_p\left(\pi - \vgamma, \frac{\pi}{2} - \vbeta\right).
    \end{equation}
\end{proof}

From Eq.~(\ref{eqn:general-symmetry}), if the expectation $F_p$ has a global optimum at the point $(\vgamma^*, \vbeta^*)$,
then it will also have a global optimum at $(2\pi-\vgamma^*, \frac{\pi}{2}-\vbeta^*)$.
Therefore, to avoid redundancies in general graphs, it is suitable to set the optimization bounds as $\bm{\Phi}_p \in [0, \pi)^p\times [0, \pi/2)^p$.
On the other hand, for even-regular graphs, Eq.~(\ref{eqn:even-reg-symmetry}) shows that an extra symmetry exists at $(\pi-\vgamma^*, \frac{\pi}{2}-\vbeta^*)$ due to the shortened period,
so the suitable bounds are $\bm{\Phi}_p \in [0, \pi/2)^p\times [0, \pi/2)^p$.
Hence, the choices of bounds in Table~\ref{tab:good-bounds} are justified.

\section{Non-adiabatic path for odd-regular graphs}\label{sec:non-adiabatic}
It is known from the previous section, the QAOA for Max-cut has symmetric properties in its expectation function.
In other words, we know that, generally, if the expectation $F_p$ has a global optimum at the point $(\vgamma^*, \vbeta^*)$,
then it will also have a global optimum at $(2\pi-\vgamma^*, \frac{\pi}{2}-\vbeta^*)$. 
For even-regular graphs, they have an extra symmetry at $(\pi-\vgamma^*, \frac{\pi}{2}-\vbeta^*)$ due to the shortened period. 
Understanding this allows us to predict, for example, the other symmetric optimum will also have the adiabatic path pattern shown in Fig.~\ref{fig:params-var}(a),
except that $\gamma_j$ will decrease and $\beta_j$ will increase, as the original optimum and the symmetric optimum are negatively related.

However, things are a bit different in odd-regular graphs. 
We found out that in odd-regular graphs, the other ``symmetric'' optimum follows the pattern shown in Fig.~\ref{fig:non-adiabatic-start}(b), with smooth decrease of $\gamma_j$ and oscillating $\beta_j$. 
In this section, we will derive the symmetry for odd-regular graphs.

\begin{theorem}[Symmetry for odd-regular graphs]
    For odd-regular graphs, the expectation has a symmetry that follows
    \begin{equation}
        F_p(\vgamma, \vbeta) = F_p(\pi-\vgamma, \tilde{\vbeta}),
        \label{eqn:odd-reg-symmetry}
    \end{equation}
    where $\tilde{\vbeta}$ has elements $\beta_j$ if $j$ is odd and $(\pi/2-\beta_j)$ if $j$ is even:
    \begin{equation}
        \tilde{\vbeta} \equiv
        \begin{pmatrix}
            \beta_1 \\
            \frac{\pi}{2} - \beta_2 \\
            \beta_3 \\
            \frac{\pi}{2} - \beta_4 \\
            \vdots
        \end{pmatrix}.
        \label{eqn:beta-tilde}
    \end{equation}
\end{theorem}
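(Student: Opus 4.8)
The plan is to reduce the claimed identity to the already-proven angle-reversal symmetry $F_p(\vgamma,\vbeta) = F_p(-\vgamma,-\vbeta)$ of Eq.~(\ref{eqn:angle-reversal}), by showing that the ansatz $\ket{\psi_p(\pi-\vgamma,\tilde{\vbeta})}$ can be transformed into $\ket{\psi_p(-\vgamma,-\vbeta)}$ by a sequence of unitary manipulations that leave the value of $\langle H_z\rangle$ unchanged (a global phase, plus factors of $X^{\otimes n}$ that commute through $H_z$ and fix $\ket{+}^{\otimes n}$, plus possibly a conjugation of the whole expectation by $Z^{\otimes n}$).

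First I would record the relevant algebraic fact for odd-regular graphs: writing $H_z = \tfrac{|E|}{2}\mathbbm{1} - \tfrac12\sum_{(j,k)\in E}Z_jZ_k$ and using $e^{i(\pi/2)Z_jZ_k} = iZ_jZ_k$, one gets $\prod_{(j,k)\in E}Z_jZ_k = \prod_{v} Z_v^{\deg v} = Z^{\otimes n}$ because every degree $\deg v = d$ is odd, and the accumulated phases cancel, $(-i)^{|E|}i^{|E|}=1$, so $e^{-i\pi H_z} = Z^{\otimes n}$ (and, incidentally, $n$ is even by the handshake lemma). Consequently $e^{-i(\pi-\gamma_j)H_z} = Z^{\otimes n}e^{i\gamma_j H_z}$, so $\ket{\psi_p(\pi-\vgamma,\tilde{\vbeta})} = \prod_{j=1}^p e^{-i\tilde\beta_j H_x}\,Z^{\otimes n}\,e^{i\gamma_j H_z}\ket{+}^{\otimes n}$.

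The core step is to slide the $p$ copies of $Z^{\otimes n}$ to the right. They commute with every $e^{i\gamma H_z}$ and satisfy $Z^{\otimes n}e^{-i\beta H_x} = e^{i\beta H_x}Z^{\otimes n}$ since $Z^{\otimes n}H_xZ^{\otimes n} = -H_x$. Carrying this out, the $Z^{\otimes n}$ coming from block $j$ flips the sign of $\tilde\beta_k$ for every $k<j$, so $\tilde\beta_k$ is flipped $p-k$ times, and the $Z^{\otimes n}$ factors collapse to $(Z^{\otimes n})^p$ acting on $\ket{+}^{\otimes n}$. For even $p$ this is just $\ket{+}^{\otimes n}$; for odd $p$ one is left with $Z^{\otimes n}\ket{+}^{\otimes n} = \ket{-}^{\otimes n}$, which I would eliminate by conjugating the entire expectation value by $Z^{\otimes n}$ (legitimate since $Z^{\otimes n}H_zZ^{\otimes n}=H_z$), at the cost of one further sign flip on every $\tilde\beta_k$. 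A short parity count then shows that in both cases block $k$ effectively carries $e^{i(-1)^{k+1}\tilde\beta_k H_x}$ acting on $\ket{+}^{\otimes n}$, i.e.\ $e^{+i\tilde\beta_k H_x}$ for odd $k$ and $e^{-i\tilde\beta_k H_x}$ for even $k$.

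Finally I would match this against $\ket{\psi_p(-\vgamma,-\vbeta)} = \prod_j e^{i\beta_j H_x}e^{i\gamma_j H_z}\ket{+}^{\otimes n}$: the $e^{i\gamma_j H_z}$ factors already agree, odd-index blocks force $\tilde\beta_k = \beta_k$, and even-index blocks require $e^{-i\tilde\beta_k H_x}$ to equal $e^{i\beta_k H_x}$, which by the $\pi/2$-periodicity of $\vbeta$ established in Theorem~\ref{theorem:period} (via $e^{i(\pi/2)H_x}\propto X^{\otimes n}$ commuting through $H_z$ and fixing $\ket{+}^{\otimes n}$) is satisfied by $\tilde\beta_k = \pi/2-\beta_k$ — precisely the definition of $\tilde{\vbeta}$ in Eq.~(\ref{eqn:beta-tilde}). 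Combined with the angle-reversal symmetry this yields $F_p(\pi-\vgamma,\tilde{\vbeta}) = F_p(-\vgamma,-\vbeta) = F_p(\vgamma,\vbeta)$. I expect the commutation bookkeeping for the $Z^{\otimes n}$ operators — in particular getting the parity of the accumulated sign flips right in the odd-$p$ case, where the extra conjugation by $Z^{\otimes n}$ is needed to reabsorb $\ket{-}^{\otimes n}$ — to be the only delicate point; the rest is routine.
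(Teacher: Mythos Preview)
Your proposal is correct and rests on the same core ingredients as the paper's proof: the identity $e^{-i\pi H_z}=Z^{\otimes n}$ for odd-regular graphs, the conjugation rule $Z^{\otimes n}e^{-i\beta H_x}=e^{i\beta H_x}Z^{\otimes n}$ (which you justify properly via $Z^{\otimes n}H_xZ^{\otimes n}=-H_x$; the paper's appeal to ``both are Hermitian'' is in fact a slip, since $e^{-i\beta H_x}$ is unitary, not Hermitian), the angle-reversal symmetry, and the $\pi/2$-periodicity in $\vbeta$.

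The one substantive difference is the direction in which you shuttle the $Z^{\otimes n}$ factors. You push them all toward $\ket{+}^{\otimes n}$ on the ket side, which forces the even/odd-$p$ case split and, for odd $p$, the extra conjugation of the whole expectation by $Z^{\otimes n}$ to reabsorb $\ket{-}^{\otimes n}$. The paper instead works with the full expectation value from the outset and moves the $Z^{\otimes n}$ factors from both bra and ket \emph{toward the central $H_z$}; since $Z^{\otimes n}$ commutes with $H_z$, the $p$ copies from the left annihilate the $p$ copies from the right regardless of the parity of $p$, and the sign-flip count on $\beta_k$ comes out directly as $(-1)^k$ with no case analysis. Your route is a little longer but entirely sound, and you correctly flagged the odd-$p$ bookkeeping as the only delicate spot; the paper's trick of meeting in the middle is exactly what removes that delicacy.
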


\begin{proof}
    For the $H_z$ of odd-regular graphs, $e^{-i\pi H_z} = Z^{\otimes n}$. Consider the ansatz with $\vgamma$ shifted by $\pi$:
    \begin{align}
        \ket{\psi_p(\vgamma + \pi, \vbeta)} & = \prod_{j=1}^p e^{-i\beta_j H_x} e^{-i\pi H_z} e^{-i\gamma_j H_z} \ket{+}^{\otimes n} \\
        & = \prod_{j=1}^p e^{-i\beta_j H_x} Z^{\otimes n} e^{-i\gamma_j H_z} \ket{+}^{\otimes n}.
    \end{align}
    The expectation is thus
    \begin{multline}
        F_p(\vgamma + \pi, \vbeta) \\
        = \bra{+}^{\otimes n} (e^{i\gamma_1 H_z} Z^{\otimes n} e^{i\beta_1 H_x} \cdots e^{i\gamma_p H_z} Z^{\otimes n} e^{i\beta_p H_x}) H_z \\
        (e^{-i\beta_p H_x} Z^{\otimes n} e^{-i\gamma_p H_z} \cdots e^{-i\beta_1 H_x} Z^{\otimes n} e^{-i\gamma_1 H_z}) \ket{+}^{\otimes n}.
        \label{eqn:odd-reg-shifted}
    \end{multline}
    We know that $Z^{\otimes n}$ commutes with $e^{-i\gamma H_z}$, but does not commute with $e^{-i\beta H_x}$. 
    However, both $Z^{\otimes n}$ and $e^{-i\beta H_x}$ are Hermitian, so
    \begin{align}
        Z^{\otimes n}e^{-i\beta H_x} & = (Z^{\otimes n}e^{-i\beta H_x})^\dagger \\
        & = e^{i\beta H_x}Z^{\otimes n}. \label{eqn:hermitian}
    \end{align}
    As we can see from the above, as $Z^{\otimes n}$ moves through $e^{-i\beta H_x}$, the sign of $\beta$ changes. 
    The goal here is to use this property to move the $Z$'s in Eq.~(\ref{eqn:odd-reg-shifted}) towards the center $H_z$, 
    so that the $Z$'s from the left and right cancels out at the center, since they commute with $H_z$.
    As $Z^{\otimes n}$ moves through the equation, the sign of $\beta$'s that the operator passes through will be toggled.
    It is not difficult to see that after all the $Z$'s have been canceled out, the resulting vector of $\vbeta$ will have elements with alternating signs.
    We define the resulting vector $\vbeta'$:
    \begin{equation}
        \vbeta' \equiv 
        \begin{pmatrix}
            -\beta_1 \\
            \beta_2 \\
            -\beta_3 \\
            \beta_4 \\
            \vdots
        \end{pmatrix}.
    \end{equation}
    Hence, we have 
    \begin{equation}
        F_p(\vgamma+\pi, \vbeta) = F_p(\vgamma, \vbeta').
    \end{equation}
    Shifting $\vgamma$ on both sides of the equation by $-\pi$ and applying the angle-reversal on the RHS yields
    \begin{align}
        F_p(\vgamma, \vbeta) & = F_p(\vgamma-\pi, \vbeta') \\
        & = F_p(\pi-\vgamma, -\vbeta').
    \end{align}
    To tidy up this, we want to get the values of $-\vbeta'$ in the range of $[0, \pi/2)^p$. 
    As mentioned in Remark~\ref{remark:any-shift}, the expectation is left unchanged if we shift any number of the parameter(s) by its period. 
    Thus, we shift the negative elements (even indices) in $-\vbeta'$ by $\pi/2$, so if $\beta_j\in [0, \pi/2)$, then $(\pi/2-\beta_j)$ will also be within the range.
    We use a new symbol $\tilde{\vbeta}$ to denote this parameter vector, arriving at Eq.~(\ref{eqn:beta-tilde}).
    Hence,
    \begin{equation}
        F_p(\vgamma, \vbeta) = F_p(\pi-\vgamma, -\vbeta') = F_p(\pi-\vgamma, \tilde{\vbeta}).
    \end{equation}
\end{proof}

Thus, Eq.~(\ref{eqn:odd-reg-symmetry}) and (\ref{eqn:beta-tilde}) explain the oscillation of $\beta$ values shown in Fig.~\ref{fig:non-adiabatic-start}(b).
By restricting the optimization bounds to $\bm{\Phi}_p \in [0, \pi/2)^p \times [0, \pi/2)^p$, the non-adiabatic path for odd-regular graphs can be avoided.

\bibliography{qaoaref}

\begin{thebibliography}{33}%
\makeatletter
\providecommand \@ifxundefined [1]{%
 \@ifx{#1\undefined}
}%
\providecommand \@ifnum [1]{%
 \ifnum #1\expandafter \@firstoftwo
 \else \expandafter \@secondoftwo
 \fi
}%
\providecommand \@ifx [1]{%
 \ifx #1\expandafter \@firstoftwo
 \else \expandafter \@secondoftwo
 \fi
}%
\providecommand \natexlab [1]{#1}%
\providecommand \enquote  [1]{``#1''}%
\providecommand \bibnamefont  [1]{#1}%
\providecommand \bibfnamefont [1]{#1}%
\providecommand \citenamefont [1]{#1}%
\providecommand \href@noop [0]{\@secondoftwo}%
\providecommand \href [0]{\begingroup \@sanitize@url \@href}%
\providecommand \@href[1]{\@@startlink{#1}\@@href}%
\providecommand \@@href[1]{\endgroup#1\@@endlink}%
\providecommand \@sanitize@url [0]{\catcode `\\12\catcode `\$12\catcode
  `\&12\catcode `\#12\catcode `\^12\catcode `\_12\catcode `\%12\relax}%
\providecommand \@@startlink[1]{}%
\providecommand \@@endlink[0]{}%
\providecommand \url  [0]{\begingroup\@sanitize@url \@url }%
\providecommand \@url [1]{\endgroup\@href {#1}{\urlprefix }}%
\providecommand \urlprefix  [0]{URL }%
\providecommand \Eprint [0]{\href }%
\providecommand \doibase [0]{https://doi.org/}%
\providecommand \selectlanguage [0]{\@gobble}%
\providecommand \bibinfo  [0]{\@secondoftwo}%
\providecommand \bibfield  [0]{\@secondoftwo}%
\providecommand \translation [1]{[#1]}%
\providecommand \BibitemOpen [0]{}%
\providecommand \bibitemStop [0]{}%
\providecommand \bibitemNoStop [0]{.\EOS\space}%
\providecommand \EOS [0]{\spacefactor3000\relax}%
\providecommand \BibitemShut  [1]{\csname bibitem#1\endcsname}%
\let\auto@bib@innerbib\@empty
\bibitem [{\citenamefont {Farhi}\ \emph {et~al.}(2014)\citenamefont {Farhi},
  \citenamefont {Goldstone},\ and\ \citenamefont {Gutmann}}]{farhi2014quantum}%
  \BibitemOpen
  \bibfield  {author} {\bibinfo {author} {\bibfnamefont {E.}~\bibnamefont
  {Farhi}}, \bibinfo {author} {\bibfnamefont {J.}~\bibnamefont {Goldstone}},\
  and\ \bibinfo {author} {\bibfnamefont {S.}~\bibnamefont {Gutmann}},\
  }\href@noop {} {\bibinfo {title} {A quantum approximate optimization
  algorithm}} (\bibinfo {year} {2014}),\ \Eprint
  {https://arxiv.org/abs/arXiv:1411.4028} {arXiv:arXiv:1411.4028 [quant-ph]}
  \BibitemShut {NoStop}%
\bibitem [{\citenamefont {Crooks}(2018)}]{Crooks2018PerformanceOT}%
  \BibitemOpen
  \bibfield  {author} {\bibinfo {author} {\bibfnamefont {G.}~\bibnamefont
  {Crooks}},\ }\bibfield  {title} {\bibinfo {title} {Performance of the quantum
  approximate optimization algorithm on the maximum cut problem},\ }\href@noop
  {} {\bibfield  {journal} {\bibinfo  {journal} {arXiv: Quantum Physics}\ }
  (\bibinfo {year} {2018})}\BibitemShut {NoStop}%
\bibitem [{\citenamefont {Guerreschi}\ and\ \citenamefont
  {Matsuura}(2019)}]{Guerreschi_2019}%
  \BibitemOpen
  \bibfield  {author} {\bibinfo {author} {\bibfnamefont {G.~G.}\ \bibnamefont
  {Guerreschi}}\ and\ \bibinfo {author} {\bibfnamefont {A.~Y.}\ \bibnamefont
  {Matsuura}},\ }\bibfield  {title} {\bibinfo {title} {Qaoa for max-cut
  requires hundreds of qubits for quantum speed-up},\ }\bibfield  {journal}
  {\bibinfo  {journal} {Scientific Reports}\ }\textbf {\bibinfo {volume} {9}},\
  \href {https://doi.org/10.1038/s41598-019-43176-9}
  {10.1038/s41598-019-43176-9} (\bibinfo {year} {2019})\BibitemShut {NoStop}%
\bibitem [{\citenamefont {Farhi}\ and\ \citenamefont
  {Harrow}(2019)}]{farhi2019quantum}%
  \BibitemOpen
  \bibfield  {author} {\bibinfo {author} {\bibfnamefont {E.}~\bibnamefont
  {Farhi}}\ and\ \bibinfo {author} {\bibfnamefont {A.~W.}\ \bibnamefont
  {Harrow}},\ }\href@noop {} {\bibinfo {title} {Quantum supremacy through the
  quantum approximate optimization algorithm}} (\bibinfo {year} {2019}),\
  \Eprint {https://arxiv.org/abs/1602.07674} {arXiv:1602.07674 [quant-ph]}
  \BibitemShut {NoStop}%
\bibitem [{\citenamefont {Moussa}\ \emph {et~al.}(2020)\citenamefont {Moussa},
  \citenamefont {Calandra},\ and\ \citenamefont {Dunjko}}]{Moussa_2020}%
  \BibitemOpen
  \bibfield  {author} {\bibinfo {author} {\bibfnamefont {C.}~\bibnamefont
  {Moussa}}, \bibinfo {author} {\bibfnamefont {H.}~\bibnamefont {Calandra}},\
  and\ \bibinfo {author} {\bibfnamefont {V.}~\bibnamefont {Dunjko}},\
  }\bibfield  {title} {\bibinfo {title} {To quantum or not to quantum: towards
  algorithm selection in near-term quantum optimization},\ }\href
  {https://doi.org/10.1088/2058-9565/abb8e5} {\bibfield  {journal} {\bibinfo
  {journal} {Quantum Science and Technology}\ }\textbf {\bibinfo {volume}
  {5}},\ \bibinfo {pages} {044009} (\bibinfo {year} {2020})}\BibitemShut
  {NoStop}%
\bibitem [{\citenamefont {Marwaha}(2021)}]{Marwaha_2021}%
  \BibitemOpen
  \bibfield  {author} {\bibinfo {author} {\bibfnamefont {K.}~\bibnamefont
  {Marwaha}},\ }\bibfield  {title} {\bibinfo {title} {Local classical
  {MAX}-{CUT} algorithm outperforms $p=2$ {QAOA} on high-girth regular
  graphs},\ }\href {https://doi.org/10.22331/q-2021-04-20-437} {\bibfield
  {journal} {\bibinfo  {journal} {Quantum}\ }\textbf {\bibinfo {volume} {5}},\
  \bibinfo {pages} {437} (\bibinfo {year} {2021})}\BibitemShut {NoStop}%
\bibitem [{\citenamefont {Basso}\ \emph {et~al.}(2022)\citenamefont {Basso},
  \citenamefont {Farhi}, \citenamefont {Marwaha}, \citenamefont {Villalonga},\
  and\ \citenamefont {Zhou}}]{leo2022sk}%
  \BibitemOpen
  \bibfield  {author} {\bibinfo {author} {\bibfnamefont {J.}~\bibnamefont
  {Basso}}, \bibinfo {author} {\bibfnamefont {E.}~\bibnamefont {Farhi}},
  \bibinfo {author} {\bibfnamefont {K.}~\bibnamefont {Marwaha}}, \bibinfo
  {author} {\bibfnamefont {B.}~\bibnamefont {Villalonga}},\ and\ \bibinfo
  {author} {\bibfnamefont {L.}~\bibnamefont {Zhou}},\ }\bibfield  {title}
  {\bibinfo {title} {The quantum approximate optimization algorithm at high
  depth for maxcut on large-girth regular graphs and the
  sherrington-kirkpatrick model}\ }(\bibinfo  {publisher} {Schloss Dagstuhl -
  Leibniz-Zentrum für Informatik},\ \bibinfo {year} {2022})\BibitemShut
  {NoStop}%
\bibitem [{\citenamefont {Akshay}\ \emph {et~al.}(2022)\citenamefont {Akshay},
  \citenamefont {Philathong}, \citenamefont {Campos}, \citenamefont
  {Rabinovich}, \citenamefont {Zacharov}, \citenamefont {Zhang},\ and\
  \citenamefont {Biamonte}}]{akshay2022}%
  \BibitemOpen
  \bibfield  {author} {\bibinfo {author} {\bibfnamefont {V.}~\bibnamefont
  {Akshay}}, \bibinfo {author} {\bibfnamefont {H.}~\bibnamefont {Philathong}},
  \bibinfo {author} {\bibfnamefont {E.}~\bibnamefont {Campos}}, \bibinfo
  {author} {\bibfnamefont {D.}~\bibnamefont {Rabinovich}}, \bibinfo {author}
  {\bibfnamefont {I.}~\bibnamefont {Zacharov}}, \bibinfo {author}
  {\bibfnamefont {X.-M.}\ \bibnamefont {Zhang}},\ and\ \bibinfo {author}
  {\bibfnamefont {J.}~\bibnamefont {Biamonte}},\ }\href
  {https://doi.org/10.48550/ARXIV.2205.01698} {\bibinfo {title} {On circuit
  depth scaling for quantum approximate optimization}} (\bibinfo {year}
  {2022})\BibitemShut {NoStop}%
\bibitem [{\citenamefont {Wurtz}\ and\ \citenamefont
  {Love}(2021)}]{Wurtz_2021}%
  \BibitemOpen
  \bibfield  {author} {\bibinfo {author} {\bibfnamefont {J.}~\bibnamefont
  {Wurtz}}\ and\ \bibinfo {author} {\bibfnamefont {P.}~\bibnamefont {Love}},\
  }\bibfield  {title} {\bibinfo {title} {Maxcut quantum approximate
  optimization algorithm performance guarantees for $p>1$},\ }\bibfield
  {journal} {\bibinfo  {journal} {Physical Review A}\ }\textbf {\bibinfo
  {volume} {103}},\ \href {https://doi.org/10.1103/physreva.103.042612}
  {10.1103/physreva.103.042612} (\bibinfo {year} {2021})\BibitemShut {NoStop}%
\bibitem [{\citenamefont {Uvarov}\ and\ \citenamefont
  {Biamonte}(2021)}]{barren_vqa2021}%
  \BibitemOpen
  \bibfield  {author} {\bibinfo {author} {\bibfnamefont {A.~V.}\ \bibnamefont
  {Uvarov}}\ and\ \bibinfo {author} {\bibfnamefont {J.~D.}\ \bibnamefont
  {Biamonte}},\ }\bibfield  {title} {\bibinfo {title} {On barren plateaus and
  cost function locality in variational quantum algorithms},\ }\href
  {https://doi.org/10.1088/1751-8121/abfac7} {\bibfield  {journal} {\bibinfo
  {journal} {Journal of Physics A: Mathematical and Theoretical}\ }\textbf
  {\bibinfo {volume} {54}},\ \bibinfo {pages} {245301} (\bibinfo {year}
  {2021})}\BibitemShut {NoStop}%
\bibitem [{\citenamefont {Cerezo}\ \emph {et~al.}(2021)\citenamefont {Cerezo},
  \citenamefont {Sone}, \citenamefont {Volkoff}, \citenamefont {Cincio},\ and\
  \citenamefont {Coles}}]{Cerezo_2021}%
  \BibitemOpen
  \bibfield  {author} {\bibinfo {author} {\bibfnamefont {M.}~\bibnamefont
  {Cerezo}}, \bibinfo {author} {\bibfnamefont {A.}~\bibnamefont {Sone}},
  \bibinfo {author} {\bibfnamefont {T.}~\bibnamefont {Volkoff}}, \bibinfo
  {author} {\bibfnamefont {L.}~\bibnamefont {Cincio}},\ and\ \bibinfo {author}
  {\bibfnamefont {P.~J.}\ \bibnamefont {Coles}},\ }\bibfield  {title} {\bibinfo
  {title} {Cost function dependent barren plateaus in shallow parametrized
  quantum circuits},\ }\bibfield  {journal} {\bibinfo  {journal} {Nature
  Communications}\ }\textbf {\bibinfo {volume} {12}},\ \href
  {https://doi.org/10.1038/s41467-021-21728-w} {10.1038/s41467-021-21728-w}
  (\bibinfo {year} {2021})\BibitemShut {NoStop}%
\bibitem [{\citenamefont {Wang}\ \emph {et~al.}(2021)\citenamefont {Wang},
  \citenamefont {Fontana}, \citenamefont {Cerezo}, \citenamefont {Sharma},
  \citenamefont {Sone}, \citenamefont {Cincio},\ and\ \citenamefont
  {Coles}}]{Wang_2021}%
  \BibitemOpen
  \bibfield  {author} {\bibinfo {author} {\bibfnamefont {S.}~\bibnamefont
  {Wang}}, \bibinfo {author} {\bibfnamefont {E.}~\bibnamefont {Fontana}},
  \bibinfo {author} {\bibfnamefont {M.}~\bibnamefont {Cerezo}}, \bibinfo
  {author} {\bibfnamefont {K.}~\bibnamefont {Sharma}}, \bibinfo {author}
  {\bibfnamefont {A.}~\bibnamefont {Sone}}, \bibinfo {author} {\bibfnamefont
  {L.}~\bibnamefont {Cincio}},\ and\ \bibinfo {author} {\bibfnamefont {P.~J.}\
  \bibnamefont {Coles}},\ }\bibfield  {title} {\bibinfo {title} {Noise-induced
  barren plateaus in variational quantum algorithms},\ }\bibfield  {journal}
  {\bibinfo  {journal} {Nature Communications}\ }\textbf {\bibinfo {volume}
  {12}},\ \href {https://doi.org/10.1038/s41467-021-27045-6}
  {10.1038/s41467-021-27045-6} (\bibinfo {year} {2021})\BibitemShut {NoStop}%
\bibitem [{\citenamefont {Grant}\ \emph {et~al.}(2019)\citenamefont {Grant},
  \citenamefont {Wossnig}, \citenamefont {Ostaszewski},\ and\ \citenamefont
  {Benedetti}}]{Grant_2019}%
  \BibitemOpen
  \bibfield  {author} {\bibinfo {author} {\bibfnamefont {E.}~\bibnamefont
  {Grant}}, \bibinfo {author} {\bibfnamefont {L.}~\bibnamefont {Wossnig}},
  \bibinfo {author} {\bibfnamefont {M.}~\bibnamefont {Ostaszewski}},\ and\
  \bibinfo {author} {\bibfnamefont {M.}~\bibnamefont {Benedetti}},\ }\bibfield
  {title} {\bibinfo {title} {An initialization strategy for addressing barren
  plateaus in parametrized quantum circuits},\ }\href
  {https://doi.org/10.22331/q-2019-12-09-214} {\bibfield  {journal} {\bibinfo
  {journal} {Quantum}\ }\textbf {\bibinfo {volume} {3}},\ \bibinfo {pages}
  {214} (\bibinfo {year} {2019})}\BibitemShut {NoStop}%
\bibitem [{\citenamefont {Zhu}\ \emph {et~al.}(2020)\citenamefont {Zhu},
  \citenamefont {Tang}, \citenamefont {Barron}, \citenamefont
  {Calderon-Vargas}, \citenamefont {Mayhall}, \citenamefont {Barnes},\ and\
  \citenamefont {Economou}}]{zhu2020adaptive}%
  \BibitemOpen
  \bibfield  {author} {\bibinfo {author} {\bibfnamefont {L.}~\bibnamefont
  {Zhu}}, \bibinfo {author} {\bibfnamefont {H.~L.}\ \bibnamefont {Tang}},
  \bibinfo {author} {\bibfnamefont {G.~S.}\ \bibnamefont {Barron}}, \bibinfo
  {author} {\bibfnamefont {F.~A.}\ \bibnamefont {Calderon-Vargas}}, \bibinfo
  {author} {\bibfnamefont {N.~J.}\ \bibnamefont {Mayhall}}, \bibinfo {author}
  {\bibfnamefont {E.}~\bibnamefont {Barnes}},\ and\ \bibinfo {author}
  {\bibfnamefont {S.~E.}\ \bibnamefont {Economou}},\ }\href
  {https://doi.org/10.48550/ARXIV.2005.10258} {\bibinfo {title} {An adaptive
  quantum approximate optimization algorithm for solving combinatorial problems
  on a quantum computer}} (\bibinfo {year} {2020})\BibitemShut {NoStop}%
\bibitem [{\citenamefont {Sack}\ and\ \citenamefont
  {Serbyn}(2021)}]{sack2021quantum}%
  \BibitemOpen
  \bibfield  {author} {\bibinfo {author} {\bibfnamefont {S.~H.}\ \bibnamefont
  {Sack}}\ and\ \bibinfo {author} {\bibfnamefont {M.}~\bibnamefont {Serbyn}},\
  }\href@noop {} {\bibinfo {title} {Quantum annealing initialization of the
  quantum approximate optimization algorithm}} (\bibinfo {year} {2021}),\
  \Eprint {https://arxiv.org/abs/2101.05742} {arXiv:2101.05742 [quant-ph]}
  \BibitemShut {NoStop}%
\bibitem [{\citenamefont {Shaydulin}\ \emph {et~al.}(2019)\citenamefont
  {Shaydulin}, \citenamefont {Safro},\ and\ \citenamefont
  {Larson}}]{multistart}%
  \BibitemOpen
  \bibfield  {author} {\bibinfo {author} {\bibfnamefont {R.}~\bibnamefont
  {Shaydulin}}, \bibinfo {author} {\bibfnamefont {I.}~\bibnamefont {Safro}},\
  and\ \bibinfo {author} {\bibfnamefont {J.}~\bibnamefont {Larson}},\
  }\bibfield  {title} {\bibinfo {title} {Multistart methods for quantum
  approximate optimization},\ }\bibfield  {journal} {\bibinfo  {journal} {2019
  IEEE High Performance Extreme Computing Conference (HPEC)}\ }\href
  {https://doi.org/10.1109/hpec.2019.8916288} {10.1109/hpec.2019.8916288}
  (\bibinfo {year} {2019})\BibitemShut {NoStop}%
\bibitem [{\citenamefont {Shaydulin}\ and\ \citenamefont
  {Wild}(2021)}]{Shaydulin_2021}%
  \BibitemOpen
  \bibfield  {author} {\bibinfo {author} {\bibfnamefont {R.}~\bibnamefont
  {Shaydulin}}\ and\ \bibinfo {author} {\bibfnamefont {S.~M.}\ \bibnamefont
  {Wild}},\ }\bibfield  {title} {\bibinfo {title} {Exploiting symmetry reduces
  the cost of training {QAOA}},\ }\href
  {https://doi.org/10.1109/tqe.2021.3066275} {\bibfield  {journal} {\bibinfo
  {journal} {{IEEE} Transactions on Quantum Engineering}\ }\textbf {\bibinfo
  {volume} {2}},\ \bibinfo {pages} {1} (\bibinfo {year} {2021})}\BibitemShut
  {NoStop}%
\bibitem [{\citenamefont {Alam}\ \emph {et~al.}(2020)\citenamefont {Alam},
  \citenamefont {Ash-Saki},\ and\ \citenamefont {Ghosh}}]{Alam2020ML}%
  \BibitemOpen
  \bibfield  {author} {\bibinfo {author} {\bibfnamefont {M.}~\bibnamefont
  {Alam}}, \bibinfo {author} {\bibfnamefont {A.}~\bibnamefont {Ash-Saki}},\
  and\ \bibinfo {author} {\bibfnamefont {S.}~\bibnamefont {Ghosh}},\ }\bibfield
   {title} {\bibinfo {title} {Accelerating quantum approximate optimization
  algorithm using machine learning},\ }in\ \href@noop {} {\emph {\bibinfo
  {booktitle} {Proceedings of the 23rd Conference on Design, Automation and
  Test in Europe}}},\ \bibinfo {series and number} {DATE '20}\ (\bibinfo
  {publisher} {EDA Consortium},\ \bibinfo {address} {San Jose, CA, USA},\
  \bibinfo {year} {2020})\ p.\ \bibinfo {pages} {686–689}\BibitemShut
  {NoStop}%
\bibitem [{\citenamefont {Moussa}\ \emph {et~al.}(2022)\citenamefont {Moussa},
  \citenamefont {Wang}, \citenamefont {Bäck},\ and\ \citenamefont
  {Dunjko}}]{Moussa_2022}%
  \BibitemOpen
  \bibfield  {author} {\bibinfo {author} {\bibfnamefont {C.}~\bibnamefont
  {Moussa}}, \bibinfo {author} {\bibfnamefont {H.}~\bibnamefont {Wang}},
  \bibinfo {author} {\bibfnamefont {T.}~\bibnamefont {Bäck}},\ and\ \bibinfo
  {author} {\bibfnamefont {V.}~\bibnamefont {Dunjko}},\ }\bibfield  {title}
  {\bibinfo {title} {Unsupervised strategies for identifying optimal parameters
  in quantum approximate optimization algorithm},\ }\bibfield  {journal}
  {\bibinfo  {journal} {{EPJ} Quantum Technology}\ }\textbf {\bibinfo {volume}
  {9}},\ \href {https://doi.org/10.1140/epjqt/s40507-022-00131-4}
  {10.1140/epjqt/s40507-022-00131-4} (\bibinfo {year} {2022})\BibitemShut
  {NoStop}%
\bibitem [{\citenamefont {Amosy}\ \emph {et~al.}(2022)\citenamefont {Amosy},
  \citenamefont {Danzig}, \citenamefont {Porat}, \citenamefont {Chechik},\ and\
  \citenamefont {Makmal}}]{amosy2022}%
  \BibitemOpen
  \bibfield  {author} {\bibinfo {author} {\bibfnamefont {O.}~\bibnamefont
  {Amosy}}, \bibinfo {author} {\bibfnamefont {T.}~\bibnamefont {Danzig}},
  \bibinfo {author} {\bibfnamefont {E.}~\bibnamefont {Porat}}, \bibinfo
  {author} {\bibfnamefont {G.}~\bibnamefont {Chechik}},\ and\ \bibinfo {author}
  {\bibfnamefont {A.}~\bibnamefont {Makmal}},\ }\href
  {https://doi.org/10.48550/ARXIV.2208.09888} {\bibinfo {title} {Iterative-free
  quantum approximate optimization algorithm using neural networks}} (\bibinfo
  {year} {2022})\BibitemShut {NoStop}%
\bibitem [{\citenamefont {Brandao}\ \emph {et~al.}(2018)\citenamefont
  {Brandao}, \citenamefont {Broughton}, \citenamefont {Farhi}, \citenamefont
  {Gutmann},\ and\ \citenamefont {Neven}}]{brandao2018fixed}%
  \BibitemOpen
  \bibfield  {author} {\bibinfo {author} {\bibfnamefont {F.~G. S.~L.}\
  \bibnamefont {Brandao}}, \bibinfo {author} {\bibfnamefont {M.}~\bibnamefont
  {Broughton}}, \bibinfo {author} {\bibfnamefont {E.}~\bibnamefont {Farhi}},
  \bibinfo {author} {\bibfnamefont {S.}~\bibnamefont {Gutmann}},\ and\ \bibinfo
  {author} {\bibfnamefont {H.}~\bibnamefont {Neven}},\ }\href@noop {} {\bibinfo
  {title} {For fixed control parameters the quantum approximate optimization
  algorithm's objective function value concentrates for typical instances}}
  (\bibinfo {year} {2018}),\ \Eprint {https://arxiv.org/abs/arXiv:1812.04170}
  {arXiv:arXiv:1812.04170 [quant-ph]} \BibitemShut {NoStop}%
\bibitem [{\citenamefont {Akshay}\ \emph {et~al.}(2021)\citenamefont {Akshay},
  \citenamefont {Rabinovich}, \citenamefont {Campos},\ and\ \citenamefont
  {Biamonte}}]{akshay2021parameter}%
  \BibitemOpen
  \bibfield  {author} {\bibinfo {author} {\bibfnamefont {V.}~\bibnamefont
  {Akshay}}, \bibinfo {author} {\bibfnamefont {D.}~\bibnamefont {Rabinovich}},
  \bibinfo {author} {\bibfnamefont {E.}~\bibnamefont {Campos}},\ and\ \bibinfo
  {author} {\bibfnamefont {J.}~\bibnamefont {Biamonte}},\ }\href@noop {}
  {\bibinfo {title} {Parameter concentration in quantum approximate
  optimization}} (\bibinfo {year} {2021}),\ \Eprint
  {https://arxiv.org/abs/2103.11976} {arXiv:2103.11976 [quant-ph]} \BibitemShut
  {NoStop}%
\bibitem [{\citenamefont {Galda}\ \emph {et~al.}(2021)\citenamefont {Galda},
  \citenamefont {Liu}, \citenamefont {Lykov}, \citenamefont {Alexeev},\ and\
  \citenamefont {Safro}}]{galda2021transferability}%
  \BibitemOpen
  \bibfield  {author} {\bibinfo {author} {\bibfnamefont {A.}~\bibnamefont
  {Galda}}, \bibinfo {author} {\bibfnamefont {X.}~\bibnamefont {Liu}}, \bibinfo
  {author} {\bibfnamefont {D.}~\bibnamefont {Lykov}}, \bibinfo {author}
  {\bibfnamefont {Y.}~\bibnamefont {Alexeev}},\ and\ \bibinfo {author}
  {\bibfnamefont {I.}~\bibnamefont {Safro}},\ }\href@noop {} {\bibinfo {title}
  {Transferability of optimal qaoa parameters between random graphs}} (\bibinfo
  {year} {2021}),\ \Eprint {https://arxiv.org/abs/2106.07531} {arXiv:2106.07531
  [quant-ph]} \BibitemShut {NoStop}%
\bibitem [{\citenamefont {Lee}\ \emph {et~al.}(2021)\citenamefont {Lee},
  \citenamefont {Saito}, \citenamefont {Cai},\ and\ \citenamefont
  {Asai}}]{lee2021parameters}%
  \BibitemOpen
  \bibfield  {author} {\bibinfo {author} {\bibfnamefont {X.}~\bibnamefont
  {Lee}}, \bibinfo {author} {\bibfnamefont {Y.}~\bibnamefont {Saito}}, \bibinfo
  {author} {\bibfnamefont {D.}~\bibnamefont {Cai}},\ and\ \bibinfo {author}
  {\bibfnamefont {N.}~\bibnamefont {Asai}},\ }\bibfield  {title} {\bibinfo
  {title} {Parameters fixing strategy for quantum approximate optimization
  algorithm},\ }in\ \href {https://doi.org/10.1109/qce52317.2021.00016} {\emph
  {\bibinfo {booktitle} {2021 {IEEE} International Conference on Quantum
  Computing and Engineering ({QCE})}}}\ (\bibinfo  {publisher} {{IEEE}},\
  \bibinfo {year} {2021})\BibitemShut {NoStop}%
\bibitem [{\citenamefont {Zhou}\ \emph {et~al.}(2020)\citenamefont {Zhou},
  \citenamefont {Wang}, \citenamefont {Choi}, \citenamefont {Pichler},\ and\
  \citenamefont {Lukin}}]{leo2020}%
  \BibitemOpen
  \bibfield  {author} {\bibinfo {author} {\bibfnamefont {L.}~\bibnamefont
  {Zhou}}, \bibinfo {author} {\bibfnamefont {S.-T.}\ \bibnamefont {Wang}},
  \bibinfo {author} {\bibfnamefont {S.}~\bibnamefont {Choi}}, \bibinfo {author}
  {\bibfnamefont {H.}~\bibnamefont {Pichler}},\ and\ \bibinfo {author}
  {\bibfnamefont {M.~D.}\ \bibnamefont {Lukin}},\ }\bibfield  {title} {\bibinfo
  {title} {Quantum approximate optimization algorithm: Performance, mechanism,
  and implementation on near-term devices},\ }\href
  {https://doi.org/10.1103/PhysRevX.10.021067} {\bibfield  {journal} {\bibinfo
  {journal} {Phys. Rev. X}\ }\textbf {\bibinfo {volume} {10}},\ \bibinfo
  {pages} {021067} (\bibinfo {year} {2020})}\BibitemShut {NoStop}%
\bibitem [{\citenamefont {Campos}\ \emph {et~al.}(2021)\citenamefont {Campos},
  \citenamefont {Rabinovich}, \citenamefont {Akshay},\ and\ \citenamefont
  {Biamonte}}]{campos2021}%
  \BibitemOpen
  \bibfield  {author} {\bibinfo {author} {\bibfnamefont {E.}~\bibnamefont
  {Campos}}, \bibinfo {author} {\bibfnamefont {D.}~\bibnamefont {Rabinovich}},
  \bibinfo {author} {\bibfnamefont {V.}~\bibnamefont {Akshay}},\ and\ \bibinfo
  {author} {\bibfnamefont {J.}~\bibnamefont {Biamonte}},\ }\bibfield  {title}
  {\bibinfo {title} {Training saturation in layerwise quantum approximate
  optimization},\ }\bibfield  {journal} {\bibinfo  {journal} {Physical Review
  A}\ }\textbf {\bibinfo {volume} {104}},\ \href
  {https://doi.org/10.1103/physreva.104.l030401} {10.1103/physreva.104.l030401}
  (\bibinfo {year} {2021})\BibitemShut {NoStop}%
\bibitem [{\citenamefont {Karp}(1972)}]{Kar72}%
  \BibitemOpen
  \bibfield  {author} {\bibinfo {author} {\bibfnamefont {R.}~\bibnamefont
  {Karp}},\ }\bibfield  {title} {\bibinfo {title} {Reducibility among
  combinatorial problems},\ }in\ \href@noop {} {\emph {\bibinfo {booktitle}
  {Complexity of Computer Computations}}},\ \bibinfo {editor} {edited by\
  \bibinfo {editor} {\bibfnamefont {R.}~\bibnamefont {Miller}}\ and\ \bibinfo
  {editor} {\bibfnamefont {J.}~\bibnamefont {Thatcher}}}\ (\bibinfo
  {publisher} {Plenum Press},\ \bibinfo {year} {1972})\ pp.\ \bibinfo {pages}
  {85--103}\BibitemShut {NoStop}%
\bibitem [{\citenamefont {Farhi}\ \emph {et~al.}(2000)\citenamefont {Farhi},
  \citenamefont {Goldstone}, \citenamefont {Gutmann},\ and\ \citenamefont
  {Sipser}}]{farhi:qaa}%
  \BibitemOpen
  \bibfield  {author} {\bibinfo {author} {\bibfnamefont {E.}~\bibnamefont
  {Farhi}}, \bibinfo {author} {\bibfnamefont {J.}~\bibnamefont {Goldstone}},
  \bibinfo {author} {\bibfnamefont {S.}~\bibnamefont {Gutmann}},\ and\ \bibinfo
  {author} {\bibfnamefont {M.}~\bibnamefont {Sipser}},\ }\bibfield  {title}
  {\bibinfo {title} {Quantum computation by adiabatic evolution},\ }\href@noop
  {} {\  (\bibinfo {year} {2000})},\ \Eprint
  {https://arxiv.org/abs/arXiv:quant-ph/0001106} {arXiv:arXiv:quant-ph/0001106}
  \BibitemShut {NoStop}%
\bibitem [{\citenamefont {Cook}\ \emph {et~al.}(2020)\citenamefont {Cook},
  \citenamefont {Eidenbenz},\ and\ \citenamefont
  {B{\"a}rtschi}}]{Cook2020TheQA}%
  \BibitemOpen
  \bibfield  {author} {\bibinfo {author} {\bibfnamefont {J.}~\bibnamefont
  {Cook}}, \bibinfo {author} {\bibfnamefont {S.}~\bibnamefont {Eidenbenz}},\
  and\ \bibinfo {author} {\bibfnamefont {A.}~\bibnamefont {B{\"a}rtschi}},\
  }\bibfield  {title} {\bibinfo {title} {The quantum alternating operator
  ansatz on maximum k-vertex cover},\ }\href@noop {} {\bibfield  {journal}
  {\bibinfo  {journal} {2020 IEEE International Conference on Quantum Computing
  and Engineering (QCE)}\ ,\ \bibinfo {pages} {83}} (\bibinfo {year}
  {2020})}\BibitemShut {NoStop}%
\bibitem [{\citenamefont {Willsch}\ \emph {et~al.}(2020)\citenamefont
  {Willsch}, \citenamefont {Willsch}, \citenamefont {Jin}, \citenamefont
  {De~Raedt},\ and\ \citenamefont {Michielsen}}]{Willsch_2020}%
  \BibitemOpen
  \bibfield  {author} {\bibinfo {author} {\bibfnamefont {M.}~\bibnamefont
  {Willsch}}, \bibinfo {author} {\bibfnamefont {D.}~\bibnamefont {Willsch}},
  \bibinfo {author} {\bibfnamefont {F.}~\bibnamefont {Jin}}, \bibinfo {author}
  {\bibfnamefont {H.}~\bibnamefont {De~Raedt}},\ and\ \bibinfo {author}
  {\bibfnamefont {K.}~\bibnamefont {Michielsen}},\ }\bibfield  {title}
  {\bibinfo {title} {Benchmarking the quantum approximate optimization
  algorithm},\ }\bibfield  {journal} {\bibinfo  {journal} {Quantum Information
  Processing}\ }\textbf {\bibinfo {volume} {19}},\ \href
  {https://doi.org/10.1007/s11128-020-02692-8} {10.1007/s11128-020-02692-8}
  (\bibinfo {year} {2020})\BibitemShut {NoStop}%
\bibitem [{\citenamefont {Lotshaw}\ \emph {et~al.}(2021)\citenamefont
  {Lotshaw}, \citenamefont {Humble}, \citenamefont {Herrman}, \citenamefont
  {Ostrowski},\ and\ \citenamefont {Siopsis}}]{Lotshaw_2021}%
  \BibitemOpen
  \bibfield  {author} {\bibinfo {author} {\bibfnamefont {P.~C.}\ \bibnamefont
  {Lotshaw}}, \bibinfo {author} {\bibfnamefont {T.~S.}\ \bibnamefont {Humble}},
  \bibinfo {author} {\bibfnamefont {R.}~\bibnamefont {Herrman}}, \bibinfo
  {author} {\bibfnamefont {J.}~\bibnamefont {Ostrowski}},\ and\ \bibinfo
  {author} {\bibfnamefont {G.}~\bibnamefont {Siopsis}},\ }\bibfield  {title}
  {\bibinfo {title} {Empirical performance bounds for quantum approximate
  optimization},\ }\bibfield  {journal} {\bibinfo  {journal} {Quantum
  Information Processing}\ }\textbf {\bibinfo {volume} {20}},\ \href
  {https://doi.org/10.1007/s11128-021-03342-3} {10.1007/s11128-021-03342-3}
  (\bibinfo {year} {2021})\BibitemShut {NoStop}%
\bibitem [{\citenamefont {Shaydulin}\ \emph {et~al.}(2022)\citenamefont
  {Shaydulin}, \citenamefont {Lotshaw}, \citenamefont {Larson}, \citenamefont
  {Ostrowski},\ and\ \citenamefont {Humble}}]{pt-weighted}%
  \BibitemOpen
  \bibfield  {author} {\bibinfo {author} {\bibfnamefont {R.}~\bibnamefont
  {Shaydulin}}, \bibinfo {author} {\bibfnamefont {P.~C.}\ \bibnamefont
  {Lotshaw}}, \bibinfo {author} {\bibfnamefont {J.}~\bibnamefont {Larson}},
  \bibinfo {author} {\bibfnamefont {J.}~\bibnamefont {Ostrowski}},\ and\
  \bibinfo {author} {\bibfnamefont {T.~S.}\ \bibnamefont {Humble}},\ }\href
  {https://doi.org/10.48550/ARXIV.2201.11785} {\bibinfo {title} {Parameter
  transfer for quantum approximate optimization of weighted maxcut}} (\bibinfo
  {year} {2022})\BibitemShut {NoStop}%
\bibitem [{\citenamefont {Morales}\ and\ \citenamefont
  {Nocedal}(2011)}]{l-bfgs-b}%
  \BibitemOpen
  \bibfield  {author} {\bibinfo {author} {\bibfnamefont {J.~L.}\ \bibnamefont
  {Morales}}\ and\ \bibinfo {author} {\bibfnamefont {J.}~\bibnamefont
  {Nocedal}},\ }\bibfield  {title} {\bibinfo {title} {Remark on “algorithm
  778: L-bfgs-b: Fortran subroutines for large-scale bound constrained
  optimization”},\ }\bibfield  {journal} {\bibinfo  {journal} {ACM Trans.
  Math. Softw.}\ }\textbf {\bibinfo {volume} {38}},\ \href
  {https://doi.org/10.1145/2049662.2049669} {10.1145/2049662.2049669} (\bibinfo
  {year} {2011})\BibitemShut {NoStop}%
\end{thebibliography}%

\end{document}